\documentclass[10pt]{article}

\title{Adaptive Channel Reshaping for Improved Entanglement Distillation}
\author{dina.abdelhadi}
\usepackage{caption}
\usepackage{pifont}
\usepackage{subcaption}
\usepackage[utf8]{inputenc}

\usepackage{amssymb}
\usepackage{amsfonts}
\usepackage{amsmath}
\usepackage{amsthm}
\usepackage{thmtools,thm-restate}
\usepackage{slashbox}
\usepackage{mathtools}
\usepackage{bbm,xfrac}
\usepackage{graphicx,float}
\usepackage{indentfirst}
\usepackage{enumerate}
\usepackage{url}

\usepackage{breakurl}
\usepackage{stmaryrd}
\usepackage{lmodern}
\usepackage{newpxtext}
\usepackage{algpseudocode}
\usepackage[symbol]{footmisc}

\usepackage[normalem]{ulem}

\usepackage{tikz}
\usetikzlibrary{quantikz2}
\definecolor{darkblue}{rgb}{0.15,0.35,0.55}
\definecolor{reddish}{rgb}{.8, 0.2, 0.2}

\usepackage{pgfplots}

\usetikzlibrary{plotmarks}
\definecolor{plotblue}{RGB}{0,120,200}
\definecolor{plotgreen}{RGB}{0,155,130}
\definecolor{plotorange}{RGB}{240,120,50}
\definecolor{plotmagenta}{RGB}{240,50,120}
\definecolor{plotgray}{RGB}{128,128,128}
\definecolor{plotcyan}{RGB}{50,190,240}
\definecolor{plotred}{RGB}{205,50,15}
\usepgfplotslibrary{fillbetween}

\oddsidemargin 0 cm
\evensidemargin 0 cm
\topmargin -1.5 cm \textheight 23 cm \textwidth 16.5 cm
\raggedbottom

\usepackage[backend=bibtex, style = alphabetic, doi = false, url = true, isbn =
false, maxbibnames = 6]{biblatex}
\bibliography{refs}
\renewbibmacro{in:}{}
\newbibmacro{string+doi}[1]{\iffieldundef{doi}{#1}{\href{https://dx.doi.org/\thefield{doi}}{#1}}}
\DeclareFieldFormat{title}{\usebibmacro{string+doi}{\mkbibemph{#1}}}
\DeclareFieldFormat[article]{title}{\usebibmacro{string+doi}{\mkbibquote{#1}}}
\DeclareFieldFormat[incollection]{title}{\usebibmacro{string+doi}{\mkbibquote{#1}}}
\DeclareFieldFormat[inproceedings]{title}{\usebibmacro{string+doi}{\mkbibquote{#1}}}

\usepackage[pdfpagelabels, linktocpage=true, breaklinks]{hyperref}
\hypersetup{
colorlinks=true,
citecolor=darkblue,
linkcolor=reddish,
urlcolor=darkblue,
pdfauthor={},
pdftitle={},
pdfsubject={}
}

\newcommand{\Tr}{{\rm Tr}}


\newcommand{\ketbra}[1]{\ket{#1}\bra{#1}}

\theoremstyle{remark}
\newtheorem{theorem}{Theorem}
\newtheorem{lemma}{Lemma}

\usepackage[linesnumbered]{algorithm2e}

\usetikzlibrary{arrows}
\usetikzlibrary{decorations.pathreplacing,calligraphy}
\usepackage{verbatim}

\usepackage{bbm}
 
\usepackage[affil-it]{authblk}

\author[1]{Dina Abdelhadi}
\author[2,3]{Tomas Jochym-O’Connor}
\author[2,4]{Vikesh Siddhu}
\author[2]{John Smolin}
\affil[1]{School of Computer and Communication Sciences, EPFL, Switzerland}
\affil[2]{IBM Quantum, IBM T.J. Watson Research Center, Yorktown Heights, NY, USA}
\affil[3]{IBM Quantum, Almaden Research Center, San Jose, CA, USA}
\affil[4]{IBM Research, IBM Research India, India}

\begin{document}

\maketitle
\begin{abstract}
Quantum communication and computation heavily rely on entanglement distillation protocols. There is a plethora of distillation protocols for Pauli channels and also for some non-Pauli channels. However, an effort to relate the effectiveness of these protocols has been missing. For most quantum channels, the gap between the existing lower and upper bounds on distillation rates is substantial, and improvements of achievable rates have been stagnant for decades. In this work, we improve the best known distillation lower bounds, for both the amplitude damping and depolarizing channels. We build on a key observation that distillation protocols reshape several uses of a very noisy channel into a better effective channel. We apply this channel processing in an adaptive and recurrent manner. For the amplitude damping channel, our suggested protocol reshapes the channel into an erasure channel, achieving rates exceeding the best known lower bound given by the channel's reverse coherent information. For the depolarizing channel, we introduce the Greedy recurrence protocol with proven performance guarantees and construct a combined protocol improving upon previously known distillation rates. Improved bounds on attainable distillation rates give insights for both practical implementations and theoretical understanding of quantum information processing.
\end{abstract}

\section{Introduction}
Useful quantum information processing systems need to share and distribute high
fidelity entangled states across noisy communication links. This link can be a
physical interconnect that joins computing or communication modules or a
sequence of noisy gates that entangle distant qubits on a single module. A noisy
point-to-point link is mathematically modeled by a completely positive trace
preserving map~(also called a quantum channel). Entanglement shared across a
quantum channel becomes noisy and entanglement distillation is the key protocol to
remove this noise~\cite{BennettBrassardEA96}.  

The ultimate benchmark for carrying out entanglement distillation across a
quantum channel is its quantum capacity: the number of Bell pairs per channel
use that can be shared between channel sender and receiver with asymptotically
vanishing error~\cite{BennettBrassardEA96, BennettShor98, BarnumKnillEA00, KretschmannWerner04}.
It is customary to assume availability of free noiseless classical messaging
for entanglement distillation since the cost of using classical channels is
much less than that of a quantum channel.

A channel's quantum capacity depends on whether one can send classical messages
one-way, from channel sender to receiver, or two-way, to and from sender and
receiver. In the former case we obtain the one-way quantum capacity $Q$ and in the
latter we obtain the two-way quantum capacity $Q_2$.
These capacities can be very different from one another~\cite{BennettBrassardEA96}, for instance a qubit
erasure channel with erasure probability $\epsilon$ has
$Q_2 - Q = \min(\epsilon, 1-\epsilon)$~\cite{Bennett_1997}.

A few main goals of quantum Shannon theory are to gain insight into channel
capacities, find mathematical expressions for computing them, and protocols to
achieve them. 
One such insight is that the one-way quantum capacity benchmarks quantum error
correction while the two-way capacity benchmarks quantum communication assisted by classical communication (see Sec.V in~\cite{BennettBrassardEA96}).
While there are well-known expressions for the one-way quantum capacity in
terms of a channel's coherent information~\cite{Lloyd97, Shor02a, Devetak05, DevetakWinter_2005}, no comparable expressions are known
for the two-way quantum capacity. As a result the field of quantum Shannon
theory for quantum communication remains open.

The absence of a capacity expression makes it challenging to find lower bounds
on the two-way quantum capacity of general channels. Prior work on such bounds
focuses instead on specific states and channels of interest to gain insight.

The earliest asymptotic protocols for entanglement distillation are breeding and
hashing~\cite{BennettBrassardEA96}. They apply to a
mixture of Bell states and proceed by doing random parity checks. The authors of \cite{VV05} elevate
the breeding protocol by proposing to do an asymptotic version of a single
parity check prior to carrying out breeding. These protocols give lower bounds
on $\mathcal{Q}_2$ for qubit Pauli channels.
For a general channel, the Devetak-Winter~\cite{DevetakWinter_2005} protocol achieves a rate for one-way
entanglement distillation equaling the channel's coherent information. Finding
a truly two-way protocol to exceed this rate can be hard. However, simply reversing the role of channel
sender and receiver in the Devetak-Winter~\cite{DevetakWinter_2005} scheme gives the best known bound for
the qubit amplitude damping channel called the channel's reverse coherent
information~\cite{HorodeckiHorodeckiEA00a, DevetakWinter_2005, DevetakJungeEA06, GarciaPatronPirandolaEA09}.

Directly using the breeding or the \cite{VV05} scheme does not necessarily result in the
best known rates for two-way entanglement distillation. When distilling a mixture of Bell states, a
finite-sized protocol prior to an asymptotic protocol can lead to non-trivial
improvements in the rate~(\cite{BDSW96,VV05}). A finite-sized protocol can select for
matching a $ZZ$ or
$XX$ parity check on two noisy two-qubits states~(sometimes called the recurrence protocol), however it can be useful to do
several parity checks at once to localize errors across several two-qubit states~\cite{leung2007adaptive}. Instead of
localizing errors on states, Hostens et.~al~\cite{AsympAdaptive06} come up with a sequence of parity
checks aimed at reducing the entropy of the Pauli error distribution.

Progress on finding optimal rates for distillation is crucial.  Unfortunately,
progress has been slow, for instance rates for the amplitude damping channel
haven't been improved in a decade and progress on Pauli channels has been far
and few between. 
Combining finite and asymptotic protocols for distillation can be crucial to
making progress, however the interplay between finite-sized protocols and
asymptotic protocols remains poorly understood.
For instance, the earliest protocols combining finite-sized recurrence with
hashing do not adapt the finite-sized protocol, the Vollbrecht \&
Verstraete~\cite{VV05}, as well as the Hostens-Dahene-De Moor~\cite{AsympAdaptive06} protocols, both require some
form of recurrence-based preprocessing to improve on hashing, but the best way to select
a finite-sized protocol isn't necessarily clear. 
Given that there are a variety of finite-sized protocols, each developed from a
different perspective, there is a need to investigate how these approaches
relate to one another and feed into asymptotic schemes for distillation.

Our main contribution is to achieve state-of-the-art rates for (two-way classically
assisted)~entanglement distillation across the qubit Pauli and amplitude
damping channels in the high noise regime. Key to obtaining these rates is a
conceptual insight: any finite-sized entanglement distillation protocol is a
way to reshape one noisy channel into a better channel. This reshaping idea captures the use of error-correcting codes and
two-way classical communication to distill entanglement but more importantly it
is amenable to concatenation with itself and gives intuition to help combine
different asymptotic and finite-sized distillation protocols with each other. 

We reshape the amplitude damping~(AD) channel using constant weight encodings
to an improved channel and obtain rates that exceed the channel's reverse
coherent information. 
Our simplest dual-rail and triple-rail encodings reshape the damping channel
into an erasure channel.  Using two-way capacity achieving schemes for the
erasure channel we obtain rates that exceed the AD channel's reverse coherent
information. In certain noise regimes, more complex encodings achieving even
higher rates use Hamming-weight-two encodings to reshape the AD channel into a
direct sum of erasure and non-erasure channels. Entanglement from non-erasure
channels can is extracted using different protocols~(see discussion in Sec. \ref{ss:hw2}).

Similarly to the AD case, the idea of reshaping turns out to be beneficial for
thinking about distillation over Pauli channels. We focus on the high noise
regime where rounds of pre-processing using recurrence are typically necessary
for achieving non-zero rates.  We observe that several rounds of successful
recurrence across a Pauli channel give a new improved Pauli noise channel. For
this new channel, a higher rate parity check is more suitable for detecting
errors. Thus, we concatenate a number of rounds of recurrence with a round of
parity checks of the $\llbracket 4,2,2\rrbracket$ code. Moreover, we note how each recurrence
round reshapes two channel uses into a new qubit Pauli channel, where the
dominant Pauli noise component changes according to the check performed. Thus,
we design a heuristic Greedy recurrence protocol to adapt to this change in the
noise distribution, and prove that this heuristic is guaranteed to improve the
entanglement fidelity at every round. Using these ideas, we achieve
state-of-the-art rates for entanglement distillation over the depolarizing
channel.

\subsection{Notation \& Preliminaries}
\label{sec:Not1}
We denote the entropy of a discrete probability distribution $[p_i]$ by
$h([p_i])\coloneqq-\sum_i p_i \log_2 p_i$, the binary entropy function as
$h_b(x) \coloneqq -x \log_2 x-(1-x)\log_2(1-x),$ and the Von Neumann entropy by
$S(\rho) \coloneqq -\Tr[\rho \log_2 \rho].$ A maximally entangled state of
dimension $d$ is denoted as 
$\ket{\Phi_d} \coloneqq \sum_{i=0}^{d-1}\ket{ii}/\sqrt{d}.$ The Pauli $Z$
matrix is defined as~$Z \coloneqq \ketbra{0} - \ketbra{1}$, while $Z_i\coloneqq
\mathbbm{I}^{\otimes i-1}\otimes Z\otimes \mathbbm{I}^{\otimes n-i}$ denotes Pauli~$Z$ applied to
the $i^{th}$ qubit. Similarly, the Pauli $X$ matrix is defined as $X \coloneqq
\ketbra{+} - \ketbra{-}$, while $X_i\coloneqq \mathbbm{I}^{\otimes i-1}\otimes X\otimes
\mathbbm{I}^{\otimes n-i}$, where $\ket{\pm} = (\ket{0} \pm \ket{1}) \sqrt{2}$. We denote the outcomes of a $Z$ measurement to be $+1$ and $-1$ corresponding to eigenstates $\ket{0}$ and $\ket{1}$, respectively. This same $\pm 1$ convention is used to denote outcomes of measuring products of Pauli operators $Z_i$ or $X_i$. Such product measurements are sometimes called parity measurements.

A quantum channel is a completely positive, trace-preserving map; it maps valid
quantum states to valid quantum states. A particularly useful representation
of quantum channels is the Kraus representation, where the action of a channel,
say $\mathcal{E}$, is represented via its Kraus operators $\{K_i\}$ as
$\mathcal{E}(\rho) = \sum_i K_i \rho K_i^\dagger,$ where $\sum_i K_i^\dagger
K_i=\mathbbm{I}$, the identity operator and $\rho$
represents a density operator~(see Ch.~4 in~\cite{Wilde17a} or Ch.2 in~\cite{Watrous18} for more details).

The coherent information of a bipartite state $\rho_{AB}$ is given by $
I_c(\rho_{AB})\coloneqq S(\rho_B)-S(\rho_{AB}), $ while the coherent
information of a quantum channel $\mathcal{E}_{A^\prime \rightarrow B}$ is
given by $ I_c(\mathcal{E})\coloneqq
\max_{\sigma_{AA^\prime}}I_c(\mathcal{I}_A\otimes\mathcal{E}_{A^\prime
\rightarrow B}(\sigma_{AA^\prime}))~$~\cite{SchumacherNielsen96, Lloyd97,
BarnumKnillEA00}. Meanwhile, the reverse coherent information of a bipartite
state $\rho_{AB}$ is given by $ I_r(\rho_{AB})\coloneqq S(\rho_A)-S(\rho_{AB}),
$ while the reverse coherent information of a quantum channel
$\mathcal{E}_{A^\prime \rightarrow B}$ is given by $ I_r(\mathcal{E})\coloneqq
\max_{\sigma_{AA^\prime}}I_r(\mathcal{I}_A\otimes\mathcal{E}_{A^\prime
\rightarrow B}(\sigma_{AA^\prime}))$~\cite{HorodeckiHorodeckiEA00a,
DevetakWinter_2005, DevetakJungeEA06, GarciaPatronPirandolaEA09}.
We will be studying the problem of entanglement distillation (sometimes referred to as entanglement sharing or entanglement generation or entanglement purification). 
Entanglement distillation may refer to the setting where two parties share $n$ copies of a noisy entangled state and attempt to distill noiseless entanglement out of these copies, or to the setting where the communicating parties are connected by a noisy quantum channel that they use $n$ times with the goal of sharing noiseless entanglement at the conclusion of the protocol.
In this work, we focus on the latter meaning, though the two interpretations somewhat overlap. 
In particular, following \cite{Devetak05}, in the entanglement distillation protocols we consider, Alice and Bob employ $n$ uses of a noisy quantum channel with the goal of ending up with a shared state $\rho$ such that $\bra{\Phi_{d_n}}\rho\ket{\Phi_{d_n}} \geq 1-\epsilon_n$, and $\epsilon_n \rightarrow 0$, as $n\rightarrow \infty$. For such protocols, the distillation rate is defined as $R = \lim_{n\rightarrow\infty}\log_2(d_n)/n.$ 

\section{Amplitude Damping Channel}
\label{sec:AmpDamp}
The amplitude damping (AD) channel is one of the most well-studied examples of non-Pauli channels. It is a well-motivated and realistic noise model, representing the decay of a quantum system from an excited state~$\ket{1}$ to low-energy (ground) state~$\ket{0}.$
We denote the amplitude damping channel with damping parameter $\gamma$ by $\mathcal{A}_{\gamma}$.
The Kraus operators of  $\mathcal{A}_{\gamma}$ are given by: 
\begin{equation}
    A_0 = \begin{bmatrix}
        1&0\\
        0&\sqrt{1-\gamma}
    \end{bmatrix}, \quad \text{and} \quad A_1 = \begin{bmatrix}
        0&\sqrt{\gamma}\\
        0&0
    \end{bmatrix}.\label{eqn:amp_damp_kraus}
\end{equation}

In the unassisted quantum communication setting, the amplitude damping channel
is one of a few rare examples where the capacity is exactly known. The
amplitude damping channel is degradable when $0 \leq \gamma < 1/2$ and
anti-degradable otherwise \cite{PhysRevA.75.012303, DevetakShor05}. Thus, its unassisted
capacity is equal to its coherent information for $0 \leq \gamma < 1/2$ and 0
otherwise.

In the two-way assisted quantum communication setting over the amplitude damping channel, the optimal rates are not known~(see~\cite{KhatriSharmaEA20} for bounds on the capacity).
The reverse coherent information has long been considered to be the best known lower bound on the two-way assisted quantum capacity of the amplitude damping channel \cite{GarciaPatronPirandolaEA09}. Using the concavity of the reverse coherent information in terms of the input state, the authors of \cite{GarciaPatronPirandolaEA09} obtain the following expression for the reverse coherent information of the amplitude damping channel
\begin{equation}\label{eqn:RCI_amp_damp}
    I_r(\mathcal{A}_{\gamma}) = \max_{0 \leq x \leq 1} h_b(x)-h_b(\gamma x).
\end{equation}

\subsection{Dual-Rail Encoding}
The dual-rail encoding has been suggested as a practical way of converting amplitude damping errors into erasure errors~\cite{Duan_2010,kubica2022erasurequbitsovercomingt1}, which are in turn easier to handle and correct.
In this encoding, an input logical qubit state is encoded into two physical qubits, according to the following mapping $$
\ket{0} \rightarrow \ket{01}, \quad \text{and} \quad
\ket{1} \rightarrow \ket{10}.
$$
Considering all the possible amplitude damping errors applied to the two physical qubits, we have 
\begin{align*}
    A_0 \otimes A_0 \ket{01} &= \sqrt{1-\gamma}\ket{01},  A_0 \otimes A_0 \ket{10} = \sqrt{1-\gamma}\ket{10},\\
    A_0 \otimes A_1 \ket{01} &= \sqrt{\gamma}\ket{00},  A_1 \otimes A_0 \ket{10} = \sqrt{\gamma}\ket{00}.
\end{align*}
Note that $A_1\otimes A_0\ket{01} = A_1\otimes A_1\ket{01}= A_0\otimes
A_1\ket{10}= A_1\otimes A_1\ket{10}=\mathbf{0}$, i.e., 
the damping operator $A_1$ annihilates $\ket{0}$ to the zero vector $\mathbf{0}$ as $\ket{0}$ is already in the lowest
level.
Thus, the input encoded state is either mapped by the amplitude damping channel back to the codespace, or to an orthogonal subspace spanned by $\ket{00}$, which plays the role of the erasure state, as it is orthogonal to the codespace. The output state after encoding and transmission through the amplitude damping channel is: 
$$ \mathcal{A}_{\gamma}^{\otimes 2}(\rho_{\textnormal{encoded}}) = (1-\gamma)\rho_{\textnormal{encoded}}+\gamma \ketbra{00}.$$
By using code concatenation along with the dual-rail encoding,  \cite{Duan_2010} construct multi-error correcting codes.

In the two-way assisted quantum communication setting, we now show that it suffices to use the dual-rail encoding for erasure conversion or equivalently, for amplitude damping error detection rather than correction to achieve rates higher than the reverse coherent information for some parameters of $\gamma.$

\paragraph{Protocol:} Alice encodes half of an entangled Bell pair into the dual-rail encoding, and sends the encoded qubits to Bob across the amplitude damping channel, employing two channel uses per entangled pair.  
The overall encoded state is 
$$\ket{\Phi_2}_{\textnormal{encoded}}=\frac{\ket{0}_{A}\ket{01}_{A^\prime}+\ket{1}_{A}\ket{10}_{A^\prime}}{\sqrt{2}},$$ where Alice keeps the system $A$ and transmits the systems $A^\prime$ over the channel to Bob. Simply by measuring the parity $Z_{B_1}\otimes Z_{B_2}$ of the 2 qubits he receives, Bob can determine if the received state is intact or has been subjected to a damping error mapping it to the erasure state $\ket{00}.$ The rest of the protocol proceeds similarly to the simple protocol for two-way assisted quantum communication over the quantum erasure channel. If Bob detects the erasure state $\ket{00},$ he uses classical communication to tell Alice that the shared encoded entangled pair should be discarded. Otherwise, the shared entangled pair is kept and is error-free. The shared encoded pair is kept with probability $1-\gamma$. The dual-rail encoding has a rate of $1/2$, thus the overall yield of this protocol is $$Y_{\textnormal{dual-rail}}(\gamma)=\frac{1-\gamma}{2}.$$

Take $\gamma=2/3$, then, $Y_{\textnormal{dual-rail}}(2/3)=1/6>0.16666$, while $I_r(A_{\gamma=2/3}) < 0.16148.$\\
\begin{figure}
    \centering
    \input{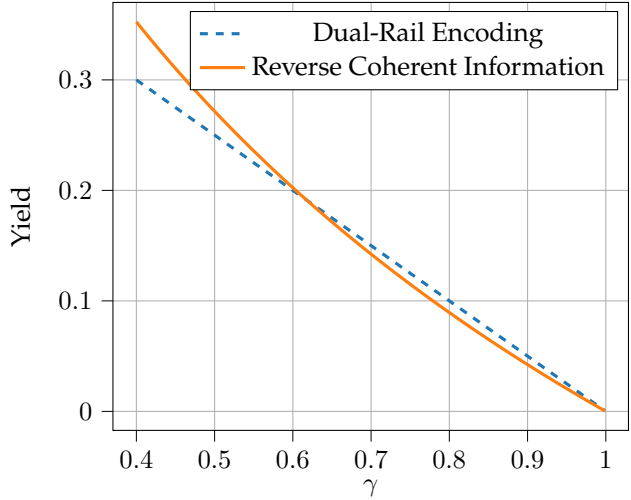}
    \caption{Comparison between the yield of the dual-rail encoding scheme and the reverse coherent information showing a range where the dual-rail encoding outperforms the reverse coherent information.}
    \label{fig:dual-rail}
\end{figure}

It is shown in \cite{siddhu2024entanglementsharingdampingdephasingchannel} that using the dual-rail encoding, the two-way assisted entanglement sharing rate can exceed that achievable by the reverse coherent information for the damping-dephasing channel. This improvement can be explained by the fact that this encoding can be used to isolate away the damping noise in the damping-dephasing channel.

\subsection{Extension to Triple-Rail Encoding} 

A generalization of the dual-rail encoding for error correction suggested in~\cite{Duan_2010}
can be adapted for improving asymptotic rates for two-way assisted entanglement distillation.
Consider an input logical qutrit state, with basis states $\{\ket{0}_L,\ket{1}_L,\ket{2}_L\},$ encoded into three physical qubits according to the mapping $$\ket{0}_L\rightarrow \ket{001}, \ket{1}_L\rightarrow \ket{010}, \ket{2}_L\rightarrow \ket{100}.$$
Using a similar protocol to the dual-rail case, Alice and Bob can distill $(\log_2 3)/3$ per channel use with probability $1-\gamma$, giving a strictly larger yield than the dual-rail case, 
\begin{equation}
Y_{\textnormal{triple-rail}}(\gamma)=\frac{(1-\gamma)\log_2 3}{3}> Y_{\textnormal{dual-rail}}(\gamma)    
\label{eq:tripleDual}
\end{equation}
\paragraph{Protocol:} Alice encodes one half of a three-dimensional entangled state $\ket{\Phi_3}$, obtaining the following encoded state  
$$\ket{\Phi_3}_{\textnormal{encoded}} = \frac{\ket{0}_A\ket{001}_{A^\prime}+\ket{1}_A\ket{010}_{A^\prime}+\ket{2}_A\ket{100}_{A^\prime}}{\sqrt{3}}.$$
Alice sends the $A^\prime$ qubits to Bob over three uses of the amplitude damping channel, with the resulting overall state being $$
\mathcal{I}_A\otimes\mathcal{A}_{\gamma}^{\otimes 3}({\Phi_3}_{\textnormal{encoded}}) = (1-\gamma)\ketbra{\Phi_3}_{\textnormal{encoded,}AB}+\gamma \frac{\mathbbm{I}_A}{3}\otimes\ketbra{000}_B.$$

Bob can then detect whether the three received qubits are in the erasure state $\ket{000}$ or in the code space by measuring the parity of the received qubits. They can do so by measuring $Z_{B_1}Z_{B_2}Z_{B_3}$ and only accepting if the outcome of the measurement is $-1$. 
Overall, with probability $(1-\gamma) $ Alice an Bob can distill $\log_2 3$ ebits over three channel uses.

\subsection{Further Hamming-Weight-1 Encodings}
One may wonder if extending this encoding to $n$ channel uses may yield even better rates. 
Consider the following encoding of half of an $n-$dimensional entangled state:
$$\ket{\Phi_n} = \frac{1}{\sqrt{n}} \sum_{i=0}^{n-1} \ket{i}_A\ket{i}_{A^\prime} \rightarrow \ket{\Phi_n}_{\textnormal{encoded}} = \frac{1}{\sqrt{n}} \sum_{i=0}^{n-1} \ket{i}_A\otimes X_{i}\ket{0}^{\otimes n}_{A^\prime}.$$
Let $A_{{x}}\coloneqq \bigotimes_{i=0}^{n-1} A_{{x}_i}, {x} \in\{0,1\}^n$ be the tensor product of the amplitude damping channel's Kraus operators, $\{ A_0,A_1\}$ defined in (\ref{eqn:amp_damp_kraus}), and
\begin{equation}
\mathbbm{A}_k\coloneqq A_0^{\otimes n-k-1}\otimes A_1 \otimes A_0^{\otimes k}.\label{eqn:single_damping_notation}
\end{equation}

Sending the $A^\prime$ systems over $n$ uses of the amplitude damping channel yields the following state shared between Alice and Bob: 
\begin{align*}
    \mathcal{I}_A\otimes\mathcal{A}_{\gamma}^{\otimes n}({\Phi_n,\textnormal{encoded}}) &= \frac{1}{n}\sum_{{x}\in\{0,1\}^n}\sum_{i=0}^{n-1} \sum_{j=0}^{n-1} \ket{i}\bra{j}_A \otimes A_{{x}} X_i\ketbra{0}^{\otimes n}_{B}X_jA_{{x}}^\dagger\\&=
\frac{1}{n}\sum_{i=0}^{n-1} \sum_{j=0}^{n-1} \ket{i}\bra{j}_A \otimes A_{0^n} X_i\ketbra{0}^{\otimes n}_{B}X_jA_{0^n}^\dagger+\frac{1}{n}\sum_{i=0}^{n-1} \ket{i}\bra{i}_A \otimes \mathbbm{A}_i X_i\ketbra{0}^{\otimes n}_{B}X_i\mathbbm{A}_i ^\dagger\\&=
(1-\gamma)\ketbra{\Phi_n}_{\textnormal{encoded}}+\gamma \frac{\mathbbm{I}_A}{n} \otimes \ketbra{0}^{\otimes n}_B
\end{align*}
Thus, by measuring the parity $Z^{\otimes n}$, Bob can determine if the received state has been `erased` into the state $\ket{0}^{\otimes n}$ or not. This way, by using the amplitude damping channel $n$ times, Alice and Bob can distill $\log_2 n$ shared ebits with probability $1-\gamma$, thus giving a scheme with yield 
\begin{equation}
Y_{\textnormal{Hamming}-1}(n,\gamma)=\frac{(1-\gamma)\log_2 n}{n}.    
\end{equation}
\begin{lemma}
\label{lem:logn_n}
   Let $n>0$ be an integer, then $(\log_2 n)/n$ is maximized at $n=3.$
\end{lemma}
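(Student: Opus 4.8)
The plan is to analyze the real function $f(x) \coloneqq (\log_2 x)/x$ on $x > 0$, exploit its unimodal shape, and thereby reduce the integer optimization to comparing a handful of values. First I would differentiate: writing $\log_2 x = \ln x / \ln 2$ gives $f'(x) = (1 - \ln x)/(x^2 \ln 2)$, so $f'(x) > 0$ for $0 < x < e$ and $f'(x) < 0$ for $x > e$. Hence $f$ is strictly increasing on $(0, e)$ and strictly decreasing on $(e, \infty)$, with a unique maximum at $x = e \approx 2.718$.

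Next I would localize the integer maximizer. Since $f$ agrees with $(\log_2 n)/n$ at positive integers $n$, the monotonicity just established shows $f(1) < f(2) < f(3)$ (because $1, 2, 3$ all lie below $e$... actually $2 < e < 3$, so I instead note $f(1) < f(2)$ and keep $f(3)$ as a separate candidate) and $f(4) > f(5) > f(6) > \cdots$, with $f(n) \le f(4)$ for all $n \ge 4$. Therefore $\max_{n \ge 1} f(n) = \max\{f(2), f(3), f(4)\}$, and only these three values need to be compared.

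Finally I would make the comparison exact, avoiding decimal approximations. Here $f(2) = 1/2$ and $f(4) = (\log_2 4)/4 = 2/4 = 1/2$, while $f(3) = (\log_2 3)/3$. The inequality $f(3) > f(2)$ is equivalent to $2 \log_2 3 > 3$, i.e. to $\log_2 9 > \log_2 8$, i.e. to $9 > 8$, which holds. Combining the pieces, $f(3) > f(2) = f(4) \ge f(n)$ for all $n \ge 4$, and $f(3) > f(2) > f(1)$, so the maximum over positive integers is attained uniquely at $n = 3$.

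There is no substantive obstacle in this argument; the only points requiring care are (i) treating the small cases $n = 1, 2$ explicitly rather than relying solely on the continuous picture, since $2 < e$ means the continuous maximum at $e$ does not immediately single out an integer, and (ii) keeping the $n = 2$ versus $n = 3$ comparison exact through $9 > 8$ instead of rounding $\log_2 3$.
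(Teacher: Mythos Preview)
Your proof is correct and takes a slightly different route from the paper. The paper checks $n=1,2,3$ directly (using a decimal estimate $(\log_2 3)/3 > 0.528$) and then shows the sequence is strictly decreasing for $n \ge 3$ via the discrete difference
\[
\frac{\log_2(n+1)}{n+1} - \frac{\log_2 n}{n} = \frac{\log_2\!\bigl[(1+1/n)^n\bigr] - \log_2 n}{n(n+1)} \le \frac{\log_2(e/n)}{n(n+1)} < 0,
\]
invoking the classical bound $(1+1/n)^n \le e$. You instead differentiate the continuous extension $f(x) = (\log_2 x)/x$ to obtain unimodality with peak at $x=e$, which reduces the integer problem to comparing $f(2)$, $f(3)$, $f(4)$; your exact verification $f(3) > f(2)$ via $9 > 8$ is cleaner than the paper's decimal bound. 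Both arguments are elementary and of comparable length; the calculus route is the more standard textbook maneuver, while the paper's discrete-difference argument avoids derivatives altogether.
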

\begin{proof}
    At $n=1$, $\log_2 1=0$, at $n=2$, $(\log_2 2)/2=1/2$,
    at $n=3$, $(\log_2 3)/3>0.528$.\\
For $n\geq 3$, $$\frac{\log_2 (n+1)}{n+1} - \frac{\log_2 n}{n} = \frac{\log_2\left[(1+\frac{1}{n})^n\right]-\log_2 n}{n(n+1)}\leq \frac{\log_2 \frac{e}{n}}{n(n+1)}<0.$$

\end{proof}
Thus, using Lemma \ref{lem:logn_n}, we see that the Hamming-weight-1 encodings already achieve their maximum yield at $n=3$ with the triple-rail encoding.

\subsection{Hamming-Weight-2 Encodings}\label{ss:hw2}
We now consider the case of encodings of Hamming weight-$2$ over $n$ qubits. 
For simplicity, we define the following notation: 
\begin{equation}
    \ket{{a},{b}} \coloneqq X_{{a}} X_{{b}} \ket{0}^{\otimes n}, \quad \text{and} \quad \ket{{a}} \coloneqq X_{{a}} \ket{0}^{\otimes n},
\end{equation}
where integers $a$ and $b$ lie between $0$ and $n-1$, inclusive.
The number of binary strings of length $n$ and Hamming weight 2 is $m = \binom{n}{2}=n(n-1)/2.$

\paragraph{Protocol:}
Alice encodes an $m-$dimensional entangled state as follows: 
$$\ket{\Phi_m} = \frac{1}{\sqrt{m}} \sum_{i=0}^{m-1} \ket{i}_A\ket{i}_{A^\prime} \rightarrow \ket{\Phi_m}_{\textnormal{encoded}} = \frac{1}{\sqrt{m}} \sum_{{a}=0}^{n-2}\sum_{{b}> {a}}^{n-1}  \ket{{a},{b}}_A\otimes \ket{{a},{b}}_{A^\prime}.$$

Alice then sends the $A^\prime$ systems to Bob over $n$ uses of the amplitude damping channel. Note that of the Kraus operators of $n$ applications of the amplitude damping channel to the encoded state, only those with at most two occurrences of the $A_1$ operator survive, due to the fact that $A_1\ket{0}=\mathbf{0}$.
Using the  notation from equation (\ref{eqn:single_damping_notation}), the resulting shared state can be expressed as:  

\begin{align*}
    \mathcal{I}_A\otimes\mathcal{A}_{\gamma}^{\otimes n}(\Phi_{m,\textnormal{encoded}})& = (1-\gamma)^2 \Phi_{m,\textnormal{encoded}}+\gamma^2 \frac{1}{m}\sum_{\substack{{a},\\{b}>{a}}}\ket{{a},{b}}\bra{{a},{b}}\otimes\ketbra{0}^{\otimes n}\\+&\underset{\rho_1}{\underbrace{
    \frac{1}{m}\sum_{\substack{{a},\\{b}>{a}}}\sum_{\substack{{a}^\prime,\\{b}^\prime>{a}^\prime}}\sum_{k=0}^{n-1}\ket{{a},{b}}\bra{{a}^\prime,{b}^\prime} \otimes \mathbbm{A}_k \ket{{a},{b}}\bra{{a}^\prime,{b}^\prime} \mathbbm{A}_k^\dagger}}.
\end{align*}
Focusing on the state $\rho_1$ that results from a single $A_1$ damping event, and noting that $\mathbbm{A}_k \ket{{a},{b}} = \mathbf{0}$ unless $k ={a}$ or $k ={b}$, we get:  

\begin{align*}
\rho_1 =& \frac{\gamma (1-\gamma)}{m}\biggl[\sum_{{k},{b}>{k},{b}^\prime>{k}}\ket{{{k}},{b}}\bra{{k},{b}^\prime} \otimes  \ket{{b}}\bra{{b}^\prime}
+\sum_{{k},{a}<{k},{a}^\prime<{k}}\ket{{a},{k}}\bra{{a}^\prime,{k}} \otimes  \ket{{a}}\bra{{a}^\prime}
\\&+\sum_{{k},{b}>{k},{a}^\prime<{k}}\ket{{k},{b}}\bra{{a}^\prime,{k}} \otimes  \ket{{b}}\bra{{a}^\prime}
+\sum_{{k},{a}<{k},{b}^\prime>{k}}\ket{{a},{k}}\bra{{k},{b}^\prime} \otimes  \ket{{a}}\bra{{b}^\prime}
\biggr]\\
=&\frac{\gamma (1-\gamma)}{m}\biggl[\sum_{{k},{b}\neq {k},{b}^\prime>{k}}\ket{{k},{b}}\bra{{k},{b}^\prime} \otimes  \ket{{b}}\bra{{b}^\prime}
+\sum_{{k},{a}\neq {k},{a}^\prime<{k}}\ket{{a},{k}}\bra{{a}^\prime,{k}} \otimes  \ket{{a}}\bra{{a}^\prime}\biggr]
\\
=&\frac{\gamma (1-\gamma)}{m}\biggl[\sum_{{k},{b}\neq {k},{b}^\prime\neq {k}}\ket{{k},{b}}\bra{{k},{b}^\prime} \otimes  \ket{{b}}\bra{{b}^\prime}
\biggr]\\&=2\gamma (1-\gamma)\frac{1}{n}\sum_{k}\ketbra{v_{k}},
\end{align*}
where $\ket{v_{k}} = (n-1)^{-1/2}\sum_{{b}=0:{b}\neq {{k}}}^{n-1} \ket{{b},{k}}_A\ket{{b}}_B$.
The overall state is then: 
\begin{align}
\begin{aligned}
    \mathcal{I}_A\otimes\mathcal{A}_{\gamma}^{\otimes n}(\Phi_{m,\textnormal{encoded}})& = (1-\gamma)^2 \Phi_{m,\textnormal{encoded}}+\gamma^2 \frac{1}{m}\sum_{\substack{{a},\\{b}>{a}}}\ket{{a},{b}}\bra{{a},{b}}\otimes\ketbra{0}^{\otimes n}+2\gamma (1-\gamma)\frac{1}{n}\sum_{k}\ketbra{v_k}.
    \end{aligned}
\end{align}
Bob can distinguish the case of an even number (0 or 2) of $A_1$ errors from the case of a single $A_1$ error by measuring the check $Z_{B}^{\otimes n}$ on their qubits. If the measurement outcome is $-1$, then a single error has occurred. The outcome $-1$ is obtained with probability $2 \gamma (1-\gamma).$ If the outcome is $1$, Bob can further distinguish the case of total erasure (two $A_1$ errors) from the case of no $A_1$ errors by applying the projective measurement with the projectors $\{\ketbra{0}^n,\mathbbm{I}-\ketbra{0}^{\otimes n}\}.$ 
In the case of no $A_1$ errors, $\log_2 m$ shared ebits can be distilled, as the original state is unaffected. This occurs with probability $(1-\gamma)^2.$
In the case of a single $A_1$ error, 
the state shared between Alice and Bob is $\sigma_{AB}= \sum_{k=0}^{n-1} \ketbra{v_k}/n,$ with $\bra{v_k}{v_{k^\prime}\rangle}=\delta_{k,k^\prime}$. Thus, it is a state with $n$ eigenvalues, each being $1/n$. The state $\sigma_A = \Tr_{A^\prime}[\Phi_{m,\textnormal{encoded}}]$ is up to an isometry $=\mathbbm{I}_m/m$, as Alice's systems did not go through the channel.
When Bob signals to Alice that the state they received suffered from a single damping error, they can then proceed to apply the protocol for entanglement distillation that achieves the reverse coherent information \cite{garcia2008reverse}. 
The reverse coherent information of this state is $$I_r(\sigma_{AB}) = S(\sigma_A)-S(\sigma_{AB})=\log_2(m)-\log_2(n) = \log_2 \frac{n-1}{2}.$$

The overall yield of the scheme is 
$$Y_{\textnormal{Hamming-2}}(n,\gamma)= \frac{1}{n}\left[(1-\gamma)^2 \log_2 \frac{n(n-1)}{2}+2\gamma(1-\gamma) \log_2 \frac{n-1}{2}\right].$$

In the case of $n=4,m=6$, we can do slightly better than to pass the state $\sigma_{AB}$ on to the reverse coherent strategy which achieves $\log_2[(n-1)/2]$. In this case, we may rewrite the vectors $\ket{v_k}$ in the computational basis as follows: 
\begin{align}
\begin{aligned}
&\ket{v_0} = \frac{1}{\sqrt{3}}\left(\ket{1100}\ket{0100}+\ket{1010}\ket{0010}+\ket{1001}\ket{0001}\right),
\\
&\ket{v_1} = \frac{1}{\sqrt{3}}\left(\ket{1100}\ket{1000}+\ket{0110}\ket{0010}+\ket{0101}\ket{0001}\right)
\\
&\ket{v_2} = \frac{1}{\sqrt{3}}\left(\ket{1010}\ket{1000}+\ket{0110}\ket{0100}+\ket{0011}\ket{0001}\right), \quad \text{and} \\
&\ket{v_3} = \frac{1}{\sqrt{3}}\left(\ket{1001}\ket{1000}+\ket{0101}\ket{0100}+\ket{0011}\ket{0010}\right).
\end{aligned}
\end{align}
Once Alice and Bob know that a single damping error event occurred, then Bob measures $Z_0\otimes Z_1$, and communicates the outcome to Alice. 
If the error commutes with $Z_0\otimes Z_1$, then Alice measures each of their first two qubits in the computational basis, obtaining a two bit binary outcome $b_0, b_1$. If $b_0\oplus b_1=0$, then Alice and Bob discard the state. Otherwise, they have gained a Bell pair. 
If the error anticommutes with $Z_0\otimes Z_1$ , then Alice measures each of their $2^{nd}$ and $3^{rd}$ qubits in the computational basis, obtaining a two bit outcome $b_0, b_1$. If $b_0\oplus b_1=0$, then Alice and Bob discard the state. Otherwise, they have gained a Bell pair. Thus, overall with probability $2/3$, Alice and Bob can distill a Bell pair, in the case of a single damping event, which gives the yield:
$$ Y_{\textnormal{Hamming-2}}^\ast(4,\gamma)= \frac{1}{4}\left[(1-\gamma)^2 \log_2 6+2\gamma(1-\gamma) \frac{2}{3}\right].$$
Although $Y_{\textnormal{Hamming-2}}^\ast(4,\gamma)>Y_{\textnormal{Hamming-2}}(4,\gamma)$, the rate of the original reverse coherent information protocol still exceeds both. It is an open question to devise similar improved protocols for $n>4$. From the plots of Figures~\ref{fig:Hamming-2},~\ref{fig:Hamming-2n6} and~\ref{fig:improvement}, we see that an advantage over the reverse coherent information yield (\ref{eqn:RCI_amp_damp}) can be obtained at $n=6$. 
\begin{figure}[H]
    \centering
    \begin{tikzpicture}
\definecolor{darkgray176}{RGB}{176,176,176}
\definecolor{darkorange25512714}{RGB}{255,127,14}
\definecolor{forestgreen4416044}{RGB}{44,160,44}
\definecolor{steelblue31119180}{RGB}{31,119,180}
\definecolor{plotblue}{RGB}{0,120,200}
\definecolor{plotgreen}{RGB}{0,155,130}
\definecolor{plotorange}{RGB}{240,120,50}
\definecolor{plotmagenta}{RGB}{240,50,120}
\definecolor{plotgray}{RGB}{128,128,128}
\definecolor{plotcyan}{RGB}{50,190,240}
\definecolor{plotred}{RGB}{205,50,15}
    \begin{axis}[
    legend pos=outer north east,
tick align=outside,
tick pos=left,
x grid style={darkgray176},
xlabel={$\gamma$},
xmajorgrids,
xtick style={color=black},
y grid style={darkgray176},
ylabel={Difference of Yield and Reverse Coherent information},
ymajorgrids,
ytick style={color=black}
]        \addplot[very thick,plotblue,loosely dotted] table [x=gamma,y = H24]{figz/diffH2.txt};
       \addplot[very thick,plotorange,dotted] table [x=gamma,y=H24ast]{figz/diffH2.txt};

       \addplot[very thick,plotgreen,dashed] table [x=gamma,y=H25]{figz/diffH2.txt};
       \addplot[very thick,red] table [x=gamma,y=H26]{figz/diffH2.txt};
       \addplot[ thick,steelblue31119180, mark=|] table [x=gamma,y=H27]{figz/diffH2.txt};
       \addplot[very thick,plotgray,loosely dashdotted] table [x=gamma,y=H28]{figz/diffH2.txt};
       \addplot[very thick,darkgray,densely dashdotted] table [x=gamma,y=TripleRail]{figz/diffH2.txt};

    \addlegendentry{$\Delta_{\text{RCI}}\left(Y_{\textnormal{Hamming-2}}(n=4)\right)$};
    \addlegendentry{$\Delta_{\text{RCI}}\left(Y_{\textnormal{Hamming-2}}^\ast(n=4)\right)$};

    \addlegendentry{$\Delta_{\text{RCI}}\left(Y_{\textnormal{Hamming-2}}(n=5)\right)$};
    \addlegendentry{$\Delta_{\text{RCI}}\left(Y_{\textnormal{Hamming-2}}(n=6)\right)$};
    \addlegendentry{$\Delta_{\text{RCI}}\left(Y_{\textnormal{Hamming-2}}(n=7)\right)$};
    \addlegendentry{$\Delta_{\text{RCI}}\left(Y_{\textnormal{Hamming-2}}(n=8)\right)$};
    \addlegendentry{$\Delta_{\text{RCI}}\left(Y_{\textnormal{Triple-Rail}}\right)$};

    \end{axis}
\end{tikzpicture}
    \caption{Comparison of the improvements of yields of various Hamming-2 encodings, and the Triple-Rail Encoding, over the Reverse Coherent Information rate. For any yield expression $Y$, let $\Delta_{\text{RCI}}(Y) = Y-I_r(\mathcal{A}_\gamma).$}
    \label{fig:Hamming-2}
\end{figure}
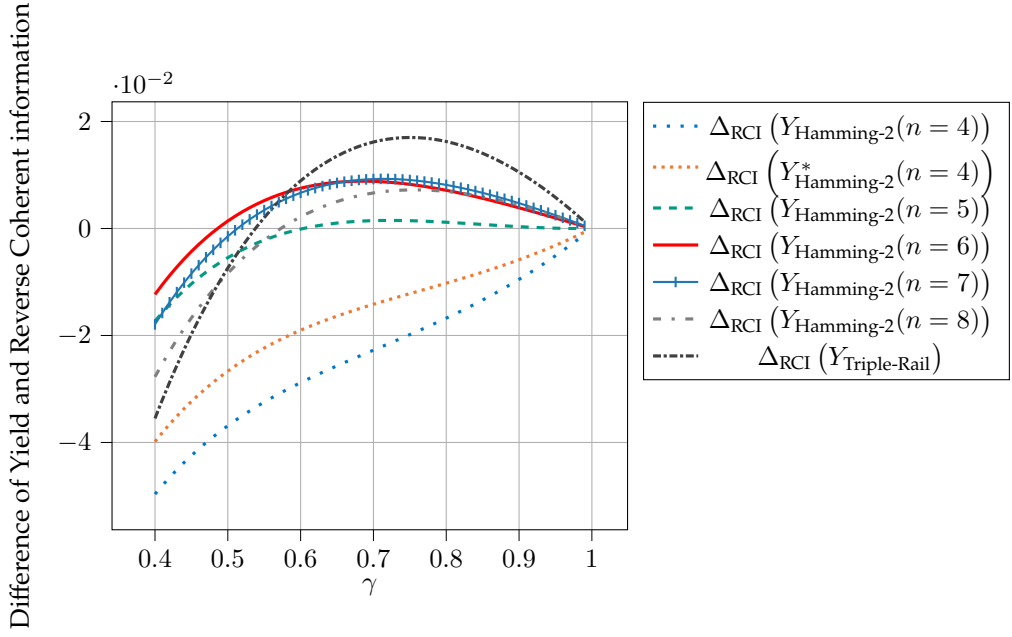

\begin{figure}[H]
    \centering
    \input{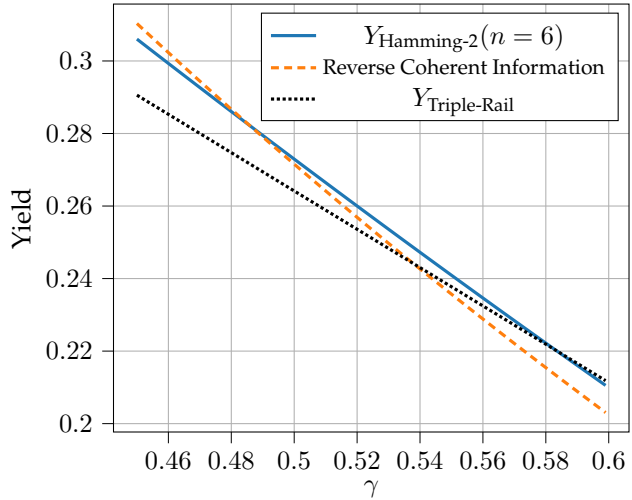}
    \caption{Comparison of the yields of the Hamming-2 encoding at $n=6$, Reverse Coherent Information, and Triple-Rail Encoding.}
    \label{fig:Hamming-2n6}
\end{figure}

\begin{figure}[H]
    \centering
    \input{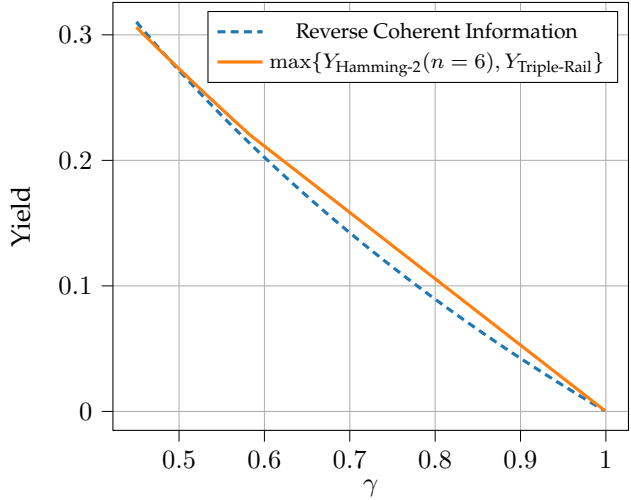}
    \caption{Comparison of the maximum of the yields of the Hamming-2 encoding at $n=6$ and the Triple-Rail Encoding vs. the Reverse Coherent Information.}
    \label{fig:improvement}
\end{figure}

\section{Pauli Channels}\label{Sec:Pauli}
In Sec.~\ref{Sec:NotationPauliDist} we define notation that we will use to discuss distillation over Pauli channels and Sec.~\ref{SS:LitRev} discusses prior work of \cite{BDSW96,VV05,leung2007adaptive}. Readers familiar with these standard discussions may skip ahead to Sec.~\ref{SS:PropProt} where we propose new protocols. 

\subsection{Notation for Entanglement Distillation over Pauli Channels}\label{Sec:NotationPauliDist}
The four Bell states are denoted as
$\ket{B_{x,z}} = I \otimes X^x Z^z(\ket{00}+\ket{11})/\sqrt{2},$
where $x,z\in\{0,1\}$. Note that $\ket{B_{00}} =\ket{\Phi_2}$.
Graphically, Bell states are denoted as in Figure \ref{fig:Bell}, with the upper wire indicating the half of the Bell pair kept by Alice, and the lower one indicating the half sent over the noisy channel to Bob.
\begin{figure}[htb!]
\centering
\begin{quantikz}
\makeebit[angle=-60]{$\ket{B_{00}}_{A,B}$} &&  \\ &\gate{x,z}   &
\end{quantikz}
\caption{$\ket{B_{x,z}}$}
\label{fig:Bell}
\end{figure}

Let the Pauli matrices be denoted by $\mathcal{P}=\{I,X,Y,Z\}.$ A capital letter subscript attached to a Pauli operator is used to indicate the subsystem of a given the Pauli operator. 

As we will be referring to stabilizer codes, we give a brief recap of stabilizer codes here. Following \cite{Gottesman97,AmbainisGottesman06}, a stabilizer code is defined via a stabilizer group $S$, an Abelian subgroup of $\mathcal{P}^{n}$, the group of $n$-qubit Pauli matrices. The codewords of such a code $\{\ket{\psi_i}\}$ are simultaneous eigenstates of the elements of $S$, with a $+1$ eigenvalue. When a codeword is sent through an $n$-qubit Pauli channel, it is affected by Pauli error $E \in \mathcal{P}^{n}$. Any such Pauli error either commutes or anticommutes with an element $S_j\in S$. Determining whether $E$ commutes or anticommutes with every generator of $S$ gives a binary string called an error syndrome. The syndrome information makes error detection, decoding and correction possible. The bits of the syndrome can be determined by measuring the stabilizer observables via projective measurements. For each such observable $S_j$, $\{\Pi_+ = (\mathbbm{I}+S_j)/2,\Pi_- = (\mathbbm{I}-S_j)/2 \}$ define elements of a projective measurement that allows determining the commutation relation of $E$ and $S_j$.


A Pauli string of length $n$ is a tensor product of Pauli matrices acting on $n$ qubits. Upto a global phase, such a Pauli string can be represented by a binary string of length $2n$.    In a similar manner an $n$-pair Bell state is denoted by $\ket{B_s}=\ket{B_{x_1,z_1}}\otimes \ket{B_{x_2,z_2}},\dots \otimes\ket{B_{x_n,z_n}}$, where $s=(x_1,z_1),\dots,(x_n,z_n)$ is a binary string of length $2n$.
This mapping between Paulis and binary strings is called the symplectic representation. A more detailed discussion of the symplectic representation is available in Appendix \ref{app:symplectic} and a more detailed exposition can be found in \cite{Rengaswamy_2018}.
This representation establishes a direct connection between linear parity checks applied to binary strings
and stabilizer measurements. Due to this connection between stabilizer measurements and binary checks, we can sometimes denote the results of stabilizer check measurements either by binary bits $\{0,1\}$, with $0$ signalling that the measured stabilizer commutes with the error and $1$ indicating anticommutation, or equivalently by $\{+1,-1\}$ with $+1$ indicating commutation and $-1$ signalling anticommutation as mentioned in Subsection \ref{sec:Not1}.

A Pauli noise channel with probabilities $[p_I, p_X, p_Y, p_Z]$ acts on an
input state as follows: 
\begin{equation}
    \rho \rightarrow p_I \rho + p_X X\rho X+p_Y Y \rho Y+p_Z Z \rho Z.
    \label{eq:PauliChan}
\end{equation}
%
%
An important example of a Pauli noise channel is the \emph{depolarizing
channel}, where $p_X=p_Y=p_Z=p/3$ and $p_I = 1-p$. An $n$-qubit quantum state
is said to be \emph{Bell-diagonal} if it can be expressed as $\rho =
\sum_{s\in\{0,1\}^{2n}}p_s \ketbra{B_s}$ for some probability distribution
$\{p_s\}.$ 
When a Pauli channel acts on each qubit of a Bell diagonal state, the resulting
state is still Bell-diagonal.

\paragraph{Useful operations on Bell pairs}
In Figures \ref{fig:bilateral-XOR} and \ref{fig:BellH}, we review some useful operations on Bell pairs.
\begin{equation*}
    \ket{B_{x_1,z_1}}\otimes \ket{B_{x_2,z_2}} \overset{\textnormal{Bilateral XOR (Fig.\ref{fig:bilateral-XOR})}}{\longrightarrow} \ket{B_{x_1,z_1\oplus z_2}}\otimes \ket{B_{x_1\oplus x_2,z_2}} 
\end{equation*}
\begin{equation*}\ket{B_{x,z}} \overset{H^{\otimes 2}\textnormal{(Fig.\ref{fig:BellH})}}{\longrightarrow}\ket{B_{z,x}} \end{equation*}
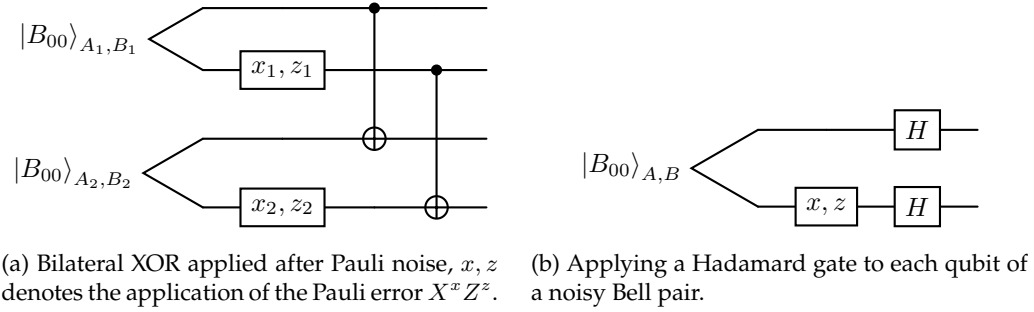
\begin{figure}[htb]
\centering
\begin{subfigure}{0.4\textwidth}
\centering
\begin{quantikz}
\makeebit[angle=-60]{$\ket{B_{00}}_{A_1,B_1}$} &              &\ctrl{2}&& \\  &\gate{x_1,z_1}  &&\ctrl{2} & \\
\makeebit[angle=-60]{$\ket{B_{00}}_{A_2,B_2}$} &&\targ{} && \\
&\gate{x_2,z_2}&&\targ{} &
\end{quantikz}
\caption{Bilateral XOR applied after Pauli noise, $x,z$ denotes the application of the Pauli error $X^xZ^z$.}
\label{fig:bilateral-XOR}
\end{subfigure}
\hspace{10pt}
\begin{subfigure}{0.4\textwidth}
\centering
\begin{quantikz}
\makeebit[angle=-60]{$\ket{B_{00}}_{A,B}$} &  &\gate{H}&\\ &\gate{x,z} &\gate{H}   &  
\end{quantikz}
\caption{Applying a Hadamard gate to each qubit of a noisy Bell pair.}
\label{fig:BellH}
\end{subfigure}
\caption{Useful operations on Bell pairs}
\label{fig:BellOps}
\end{figure}

\paragraph{Entanglement Fidelity}
The entanglement fidelity of a 2-qubit quantum state $\rho$ was defined in
\cite{schumacher1996sendingquantumentanglementnoisy} as, $F_e(\rho) \equiv
\bra{B_{00}}\rho\ket{B_{00}}.$ If the state is Bell-diagonal, i.e, 
\begin{equation}
    \rho = p_I \ketbra{B_{00}}+p_X \ketbra{B_{10}}+p_Y \ketbra{B_{11}}+p_Z
    \ketbra{B_{01}},
    \label{eq:1qbitBellD}
\end{equation}
then the entanglement fidelity is $p_I.$

\subsection{Literature Review}\label{SS:LitRev}

Pauli noise is one of the most widely studied noise models in quantum information, with a lot of attention being devoted to studying the depolarizing noise model. Entanglement sharing over Pauli channels has been studied in several works, with some of the main ideas established in the works of \cite{BDSW96, VV05,leung2007adaptive}. 
In all of these works, the protocols can be broadly described as follows: \begin{itemize}
    \item Alice sends  halves of Bell pairs to Bob across a noisy Pauli channel, resulting in a Bell-diagonal state shared between Alice and Bob.
    \item Alice and Bob use the resources available to them (one-way or two-way LOCC) to measure whether the Pauli error that occurred commutes or anti-commutes with a certain Pauli string or stabilizer. 
    \item Based on the outcome, Alice and Bob may decide to measure another check, or to apply a Pauli correction to their state, or to discard some qubits. 
    \item At the end of the protocol's iterations, Alice and Bob have used the channel $n$ times to end up with a shared state $\rho$, such that $\bra{B_{00}^{\otimes k}}\rho\ket{B_{00}^{\otimes k}} = 1-\epsilon_n$, with $\epsilon_n \rightarrow 0$ as $n \rightarrow \infty,$ achieving a distillation rate of $\lim_{n \rightarrow \infty}k(n)/n.$
\end{itemize}  

\paragraph{Doing parity checks.}
We consider the setting where Alice  inputs $n$ halves of Bell pairs into the Pauli noise channel, and the resulting shared state is a Bell-diagonal state of the form: 
\begin{equation}
    \rho_n = 
    \sum_{s\in\{0,1\}^{2n}}p_s \ketbra{B_s} = \sum_{P\in\mathcal{P}^n}
    \textnormal{Pr}\left[P_B\right](\mathbbm{I}_A\otimes P_B)
    \ketbra{B_{00}}^{\otimes n}(\mathbbm{I}_A\otimes P_B) .
    \label{eqn:noisyPauliState}
\end{equation}
Consider two Pauli strings $P_B,Q \in \mathcal{P}^n$, where $P_B$ is a Pauli acting on Bob's part of the state and $Q$ is a stabilizer check. Alice and Bob share a state $\ket{\psi}_{AB}=\mathbbm{I}_A\otimes P_B\ket{B_{00}}_{AB}^{\otimes n}$ and wish to determine whether $P_B$ and $Q$ commute or anti-commute. This information can be determined in two possible ways.
The first is that Alice and Bob carry out  local measurements directly on their respective parts of the shared state, consuming one of the shared noisy Bell pairs, and comparing their outcomes via classical communication to deduce the parity. 
The second is that they carry out local measurements assisted by a shared ancillary perfect Bell pair, consuming this ancillary pair in the process, then using classical communication to compute the resulting parity.  
According to the terminology used in \cite{AsympAdaptive06}, the former would be called a Bilateral Pauli Measurement (BPM), while the latter is called an Appended E-bit Measurement (AEM). For a self-contained discussion, we provide a summary regarding AEMs and BPMs in Appendix \ref{app:AEMBPM}.

\paragraph{Finite and Infinite-Size Checks. }
We are interested in optimizing the asymptotic rates achievable for entanglement distillation. In this case we assume Alice and Bob share $n$ noisy Bell pairs in the state $\sum_{s\in \{0,1\}^{2n}}p_s \ketbra{B_s}$ where $n \mapsto \infty$.
The checks they apply to their shared state can be either finite-size checks, measuring a small block size, or infinite-size checks, that is they may scale with system size.

For a finite-size parity check, Alice and Bob agree on a check with symplectic representation $r$ and a finite integer $k$, which defines the number of Bell pairs involved in a single instance of the check. In terms of the binary symplectic notation, $r$ is a binary string of length $2k$. Denote the state of $k$ of the shared noisy Bell pairs by $\sum_{t\in \{0,1\}^{2k}}p_t \ketbra{B_t}$. Alice and Bob wish to determine the value of the check given by $f_r(t) = t \cdot r \bmod 2$, for $t,r \in \{0,1\}^{2k}.$ To do so, Alice and Bob apply local Cliffords and local Pauli measurements on their respective halves of the noisy state. The measurement results they individually obtain are random, but the correlations of the measurements contain the information about the value of the check. By using classical communication to exchange the measurement outcomes, they can determine $f_r(t)$. Since we are concerned with the asymptotic setting, Alice and Bob group their infinite string of $n$ noisy shared Bell pairs into batches of finite size $k$, and determine the value of the parity check for each batch.

An infinite-size check, on the other hand, involves all of the shared pairs at once, where it determines the value of some $f_x(s) = x \cdot s \bmod 2$, for $s,x \in \{0,1\}^{2n}$ as $ n \rightarrow \infty.$

\vspace{10pt}
With these concepts in mind, we now devote the next few subsections to reviewing some of the most important protocols proposed in the literature for entanglement distillation over Pauli channels. These protocols will be particularly relevant as building blocks of the combined protocol which we will construct in following sections. For the sake of a self-contained discussion, we discuss these protocols in more detail in Appendix \ref{app:lit_review}. 

\subsubsection{Breeding \& Hashing}
Given a shared noisy state of the form (\ref{eqn:noisyPauliState}), both the
breeding and hashing protocols proposed in \cite{BDSW96} rely on carrying out
many random parity checks until the binary string $s$ can be determined. Once
$s$ is known, Bob can apply a local Pauli correction to obtain noiseless Bell
pairs.  Under the assumption that the noisy state is the result of $n$
independent and identically distributed~(i.i.d) applications of a Pauli noise
channel with probabilities $[p_I,p_X,p_Y,p_Z]$~\eqref{eq:PauliChan}, the
authors of \cite{BDSW96} use the asymptotic equipartition property to show that
$nh([p_I,p_X,p_Y,p_Z])$ parity checks would suffice to determine $s$, since
each check rules out half of the possible Pauli strings.  Thus, hashing and
breeding protocols achieve a rate
\begin{equation}
    Y_{\textnormal{Hashing}}=1-h([p_I,p_X,p_Y,p_Z]).
    \label{eq:hashing}
\end{equation}
More generally, for a Bell-diagonal state~\eqref{eqn:noisyPauliState}
the breeding/hashing {style} protocols achieve the coherent information of $\rho$,
{$n-S(\rho)=n-h(\{p_s\})$~\cite{DevetakWinter_2005}}.

The difference between breeding and hashing is that the former uses AEMs, while
the latter uses BPMs.  It is worth noting that such protocols require no
backward communication from Bob to Alice. 

\subsubsection{Recurrence}
\label{sec:recurrence}
The $Z$-recurrence protocol was introduced in \cite{BDSW96} as a preprocessing
step preceding the hashing protocol, where a BPM finite-size check is applied
in a recursive manner. The input of one step of the recurrence protocol is
pairs of noisy Bell pairs $I_{A}\otimes P_{B_1,B_2}\ket{B_{00}}_{A_1
B_1}\otimes\ket{B_{00}}_{A_2 B_2}$. All pairs affected by any $2$-qubit Pauli
error $P_{B_1,B_2}$ that anticommutes with $Z_{B_1}Z_{B_2}$, are rejected. Note
this is exactly the stabilizer of a size two $Z$-type repetition code. 
The resulting pair, if accepted, has a new induced Pauli channel acting on it.
Each step of recurrence halves the rate, as one out of every 2 Bell pairs is
consumed by the measurement. One can repeat the recurrence iterations to
further improve the induced channel, or switch over to hashing, depending on
which yields a better overall rate. Recurrence inherently requires backward
communication from Bob to Alice to communicate which Bell pairs are to be
accepted, and which are to be discarded. For the depolarizing channel combining
recurrence with hashing, the authors of \cite{BDSW96} show that two-way
communication rates are provably higher than those possible with one-way
communication only.
Analysis in~\cite{AsympAdaptive06} suggests the power of recurrence is due to
entropy reduction, where degenerate Pauli errors (i.e., an error $P_{B_1,B_2}$,
and $Z_{B_1}Z_{B_2}P_{B_1,B_2}$) are mapped together in the new induced
channel. The best rates shown in \cite{BDSW96} are achieved using the
Macchiavello recurrence protocol, discussed in~\cite[Section 3.2.1]{BDSW96}, where rounds
of $Z$-recurrence are applied, with each round being followed by a $B_x$
rotation.

\subsubsection{Interpolation of Recurrence and Hashing Protocol \cite{VV05}}
In~\cite{VV05}, the authors introduce the interpolation of recurrence and
hashing protocol, achieving rates exceeding
$Y_{\textnormal{Hashing}}$~\eqref{eq:hashing} for the depolarizing channel.
This advantage of~\cite{VV05} over hashing is observed at all depolarizing
noise rates. This protocol utilizes $2$-way classical communication between
Alice and Bob.

The protocol consists of a step of partial breeding that uses many random
infinite-size AEM checks.
The check determines if the Pauli error on two consecutive Bell pairs is
in $\mathcal{P}_{\text{even}}=\{IZ,ZI,XX,YY,ZZ,II,XY,YX\}$ or $
\mathcal{P}_{\text{odd}}=\{XZ,ZX,XI,YI,IX,IY,YZ,ZY\}$.
This step projects the state of the $i^{th}$ and $(i+1)^{th}$ noisy Bell pairs
with initial state {$\rho_2$~(setting $n=2$ in~\eqref{eqn:noisyPauliState}) into
\begin{align}
    \begin{aligned}
        \rho_{\text{even}} &= \frac{1}{p_{\text{even}}}
    \sum_{P \in \mathcal{P}_{\text{even}}}
    \textnormal{Pr}\left[P_B\right](\mathbbm{I}_A\otimes P_B) 
    \ketbra{B_{00}}^{\otimes 2} (\mathbbm{I}_A\otimes P_B), 
    \quad \text{and} \\
        \rho_{\text{odd}} &= \frac{1}{p_{\text{odd}}} 
    \sum_{P \in \mathcal{P}_{\text{odd}}}
    \textnormal{Pr}\left[P_B\right](\mathbbm{I}_A\otimes P_B) 
    \ketbra{B_{00}}^{\otimes 2} (\mathbbm{I}_A\otimes P_B),
    \end{aligned}
\end{align}
with probability  
$p_{\text{even}} = \sum_{P \in \mathcal{P}_{\text{even}}}
    \textnormal{Pr}\left[P_B\right]$ 
and
$p_{\text{odd}} = \sum_{P \in \mathcal{P}_{\text{odd}}}
    \textnormal{Pr}\left[P_B\right]$, respectively .
}
If the even output is obtained, then the state $\rho_{\text{even}}$ is passed
on to the hashing protocol. Otherwise, the BPM check $ZI$ is measured, and
either the state 
$$\rho_{\text{odd},0} = \frac{p_{ZX}+p_{IX}}{p_{0}} 
 X_B \ketbra{B_{00}}X_B+\frac{p_{ZY}+p_{IY}}{p_0}Y_B \ketbra{B_{00}}Y_B$$
is obtained with probability $p_0/p_{\text{odd}}$ and then further passed on
to the hashing protocol, where $p_0 = p_{ZX}+p_{IX}+p_{ZY}+p_{IY}$, or the state 
$$\rho_{\text{odd},1} = \frac{p_{XZ}+p_{YZ}}{p_{1}} 
 Z_B \ketbra{B_{00}}Z_B+\frac{p_{XI}+p_{YI}}{p_1}\ketbra{B_{00}}$$
is obtained with probability $p_1/p_{\text{odd}}$, where $p_1 = p_{XZ}+p_{YZ}+p_{XI}+p_{YI}$ and then further passed on
to the hashing protocol.

For a detailed discussion of the protocol steps, see Appendix \ref{app:VV}. The
overall (normalized per Bell pair) yield of the \cite{VV05} protocol is 
\begin{align}
    \text{Yield}_{VV}(\rho_2)=
p_{\text{even}}\left[1- \frac{S(\rho_{\text{even}})}{2}\right]-\frac{h_b(p_{\text{even}})}{2}+\frac{p_0}{2} \left[1-S\left(\rho_{\text{odd},0}\right)\right]+\frac{p_1}{2}  \left[1-S\left(\rho_{\text{odd},1}\right)\right].
    \label{eq:yieldVV}
\end{align}

In the case of $\rho_2$ being the output of independent and identically
distributed channel uses, where $\rho_2 = \rho^{\otimes 2}$, with
$\rho$~\eqref{eq:1qbitBellD} the result of sending one half of a Bell state
through a Pauli noise channel, the yield expression given by \eqref{eq:yieldVV}
reduces to equations (33,34) of~\cite{VV05}, where it becomes apparent that the
protocol indeed outperforms hashing~\eqref{eq:hashing}, as
\begin{equation}
\label{eq:VVbetterthanH_iid}
\text{Yield}_{VV}(\rho^{\otimes 2}) = Y_{\text{hashing}} + 
\frac{p_{\text{odd}}}{4}\left(h_b\left(\frac{p_I}{p_I+p_Z}\right)+h_b\left(\frac{p_X}{p_X+p_Y}\right)\right).
\end{equation}

The protocol of~\cite{VV05} was further analyzed and generalized into a scheme
with improved rates in~\cite{AsympAdaptive06}. It is important to note that
{to achieve non-zero rates} in the high noise regimes, both the protocols
presented in~\cite{VV05} and \cite{AsympAdaptive06} still require preprocessing
via a number of iterations of the recurrence protocol.

\subsubsection{Adaptive Entanglement Purification Protocol $\texttt{AEPP}^*(4)$}

In \cite{leung2007adaptive}, the authors propose an adaptive protocol that
relies on performing a BPM (finite-size) checks on $m$ noisy Bell pairs to
determine if the Pauli error affecting the state commutes or anticommutes with
$Z^{\otimes m}$.  In case the error commutes with $Z^{\otimes m}$, the state is
passed on to the hashing protocol. Otherwise, further shorter finite-size
checks are applied to localize the error. 

A modified version of this protocol is also suggested in the same work
\cite{leung2007adaptive}, where instead of switching directly to hashing when
the error commutes with $Z^{\otimes m}$, a check in the complementary basis is
performed first. The authors use these ideas to devise the $\texttt{AEPP}^*(4)$
protocol.  Under the assumption that the input states are i.i.d, the protocol
proceeds as follows; first a $ZZZZ$ check is measured over 4 noisy Bell pairs,
giving the syndrome bit $b_1$. If the outcome is 0 (indicating the error
commutes with $ZZZZ$), then we measure $XXXX$, giving the syndrome bit $b_2$.
While the actual measurement implemented is that of $XXX$ on the three
surviving pairs, as a result of having implemented the circuit for the $ZZZZ$
measurement, we are effectively measuring if the original error before the
$ZZZZ$ check commutes with the 4-qubit check $XXXX$.

If the $XXXX$ measurement yields the outcome $1$ (indicating the error
anticommutes with $XXXX$), the state is undistillable~(as can be checked via
the condition for Bell-diagonal states being PPT (\ref{eq:BellPPTCondition}),
discussed in \cite{divincenzo2002quantum} and restated in Appendix
\ref{app:bellppt}.) and is thus discarded.  If the $XXXX$ measurement yields
the outcome $0$, then the 2-qubit state that remains is passed on to other
protocols. {The qubits in this state have correlated errors.}

Finally, if the outcome of the  $ZZZZ$ check  is 1, instead of measuring
$XXXX$, we measure $ZZ$ on the first two qubits, giving the syndrome bit
$b_2^\prime$. If the outcome is $0$, the first pair is accepted and the third
is discarded. If the outcome is $1$, then the third pair is discarded, while
the first is kept. The accepted state is then passed on to other protocols.

\subsection{Proposed Combined Protocol}\label{SS:PropProt}
We propose a combined protocol, which at high depolarizing noise rates,
achieves higher rates than the aforementioned protocols, such as the protocol
of \cite{VV05} preceded by a number of recurrence iterations or the protocol of
\cite{leung2007adaptive}. 

The proposed protocol can be generally summarized as follows: a number of
recurrence iterations are carried out, then the accepted states are Bell-states
that are acted on by an effective channel that acts as i.i.d. uses of a Pauli
channel which is improved in comparison to the original channel acting on the
initial states. The accepted states
can then either be directly passed on to the protocol of \cite{VV05} or to a
round of the $\texttt{AEPP}^*(4)$ protocol. The accepted states of the
$\texttt{AEPP}^*(4)$ protocol are passed on to the protocol of \cite{VV05}. 

Since our protocol is composed of all these building blocks, in the following
sections, we scrutinize each component, to ensure that the decisions made at
each step achieve the largest yield possible. We study the recurrence protocol,
and introduce the Greedy recurrence protocol. Moreover, we study the case when
$\texttt{AEPP}^*(4)$ protocol outputs correlated states over 2 Bell pairs
rather than i.i.d. states. Since these correlated states are then fed into the
\cite{VV05} protocol, we discuss the conditions under which the \cite{VV05}
protocol applied to more general states still provides an advantage over
hashing.

\subsubsection{Further Analysis of Recurrence}

Generally, the recurrence parity check may be a $Z$-check, where Alice and Bob
determine if the 2-qubit Pauli error commutes with the stabilizer $Z\otimes Z$,
or it can be an $X$-check measuring the stabilizer $X\otimes X$ or a $Y$-check
measuring $Y\otimes Y$. Figure \ref{fig:Qrecurrence} shows circuits for
realizing these measurements. After applying one of theses circuits, Alice and
Bob use classical communication to compare the results of their $Z$-basis
measurements. They only accept pairs where their measurement outcomes are
equal. After one recurrence iteration, the rate is halved, since out of every
two input noisy Bell pairs, at most one Bell pair survives. One may think of
the recurrence measurement as a stabilizer code with a single stabilizer,
$n=2,k=1$. These circuits impose a certain choice of logical operators and
create an induced Pauli channel on the accepted pairs.

In what follows, we assume that the input noisy Bell pairs to the recurrence
protocol are the result of sending halves of perfect Bell pairs through
independent uses of identically distributed Pauli noise channels.  Let $Q_0
\equiv X, Q_1 \equiv Y, Q_2 \equiv Z$, with the index $i$ of $Q_i$ defined
modulo $3.$ Then, the output of $Q_i$-recurrence is accepted with probability
$p_{\text{pass}}(Q_i) = (p_I+p_{Q_i})^2+(p_{Q_{i+1}}+p_{Q_{i+2}})^2.$ The
accepted output state in each of the circuits of Figure \ref{fig:Qrecurrence}
after a round of $Q_i$-recurrence will be a Bell-diagonal state with the
following Pauli error probabilities
\begin{align}
    \left[p_I^\prime =\frac{p_I^2+p_{Q_i}^2}{p_{\text{pass}}(Q_i)},p_X^\prime=\frac{p_{Q_{i+1}}^2+p_{Q_{i+2}}^2}{p_{\text{pass}}(Q_i)},
    p^\prime_{Q_{2-(i\bmod 2)}}=\frac{2p_{I}p_{Q_i}}{p_{\text{pass}}(Q_i)},
  p^\prime_{Q_{1+(i\bmod 2)}}=\frac{2p_{Q_{i+1}}p_{Q_{i+2}}}{p_{\text{pass}}(Q_i)}.\right] \label{eqn:accepted_distribution}
\end{align}

The rejected output state for any $Q_i$-recurrence will be a Bell-diagonal state $\rho_{AB} = p_I^\ast \ketbra{B_{00}}+p_X^\ast X_B\ketbra{B_{00}}X_B+p_Y^\ast Y_B\ketbra{B_{00}}Y_B+p_Z^\ast Z_B\ketbra{B_{00}}Z_B$, with $p_I^\ast=p_X^\ast$ and $p_Y^\ast=p_Z^\ast.$ 
Let \begin{align*}
    q_1 &= (p_I^\ast + p_Z^\ast) / 2,
    q_2 = (p_X^\ast + p_Y^\ast) / 2\\
    q_3 &= (p_I^\ast - p_Z^\ast) / 2,
    q_4 = (p_X^\ast - p_Y^\ast) / 2.
\end{align*}
The partial transpose of the rejected state is given by 
\begin{equation}
    \rho_{AB}^{T_B} = 
    \begin{bmatrix} 
    q_1& 0& 0& q_4 \\ 
    0& q_2&q_3&0   \\ 
    0& q_3& q_2& 0 \\ 
    q_4& 0 &0& q_1
    \end{bmatrix}
        \label{eqn:partial_transpose}
\end{equation}
Since the rejected state has $p_I^\ast=p_X^\ast$ and $p_Y^\ast=p_Z^\ast,$ we
get $q_1=q_2, q_3=q_4,$ implying that $\rho_{AB}^{T_B} = q_1 \mathbbm{I}+q_3 X
\otimes X.$ Thus, the eigenvalues of $\rho_{AB}^{T_B}$ are
$p_I^\ast,p_I^\ast,p_Z^\ast,p_Z^\ast$. All of the eigenvalues are non-negative,
thus the rejected state $\rho_{AB}$ is PPT and therefore undistillable. This is
a useful guarantee that all states discarded by the recurrence protocol, in the
i.i.d Pauli noise case, are indeed undistillable on their own.

\begin{figure}[htb]
\centering
\begin{subfigure}{.5\textwidth}
  \centering
  \begin{quantikz}
\makeebit[angle=-60]{} &              &\ctrl{2}&& \\  &\gate{x_1,z_1}  &&\ctrl{2} & \\
\makeebit[angle=-60]{} &&\targ{} && \meter{}\\
&\gate{x_2,z_2}&&\targ{} &\meter{}
\end{quantikz}
  \caption{$Z$-Recurrence, \\state before measurement:$\ket{B_{x_1,z_1\oplus z_2}}\ket{B_{x_1\oplus x_2,z_2}}$}
  \label{fig:Z}
\end{subfigure}%
\begin{subfigure}{.5\textwidth}
  \centering
  \begin{quantikz}
\makeebit[angle=-60]{} &&\gate{H}&\ctrl{2}&& \\
&\gate{x_1,z_1} &\gate{H}&&\ctrl{2} & \\
\makeebit[angle=-60]{}&&\gate{H}&\targ{} &&\meter{} \\
&\gate{x_2,z_2}&\gate{H}&&\targ{}&\meter{} 
\end{quantikz}

  \caption{$X$-Recurrence, \\state before measurement:$\ket{B_{z_1,x_1\oplus x_2}}\ket{B_{z_1\oplus z_2,x_2}}$}
  \label{fig:X}
  
\end{subfigure}

\begin{subfigure}{.5\textwidth}
  \centering
\begin{quantikz}
\makeebit[angle=-60]{} &&\ctrl{2}&\gate{H}&\ctrl{2}&\gate{H}&&&\\
&\gate{x_1,z_1}  &&\ctrl{2} &&\gate{H}&\ctrl{2}&\gate{H}& \\
\makeebit[angle=-60]{} &&\targ{} &&\targ{} &&&& \meter{} \\
&\gate{x_2,z_2}&&\targ{} &&&\targ{} &&\meter{} 
\end{quantikz}

  \caption{$Y$-Recurrence, \\state before measurement:$\ket{B_{x_1\oplus z_2,z_1\oplus z_2}}\ket{B_{x_1\oplus x_2 \oplus z_1\oplus z_2,z_2}}$}
  \label{fig:Y}
  
\end{subfigure}
\caption{Circuits for implementing $ZZ$,$XX$,$YY$ check measurements.} 
\label{fig:Qrecurrence}
\end{figure}
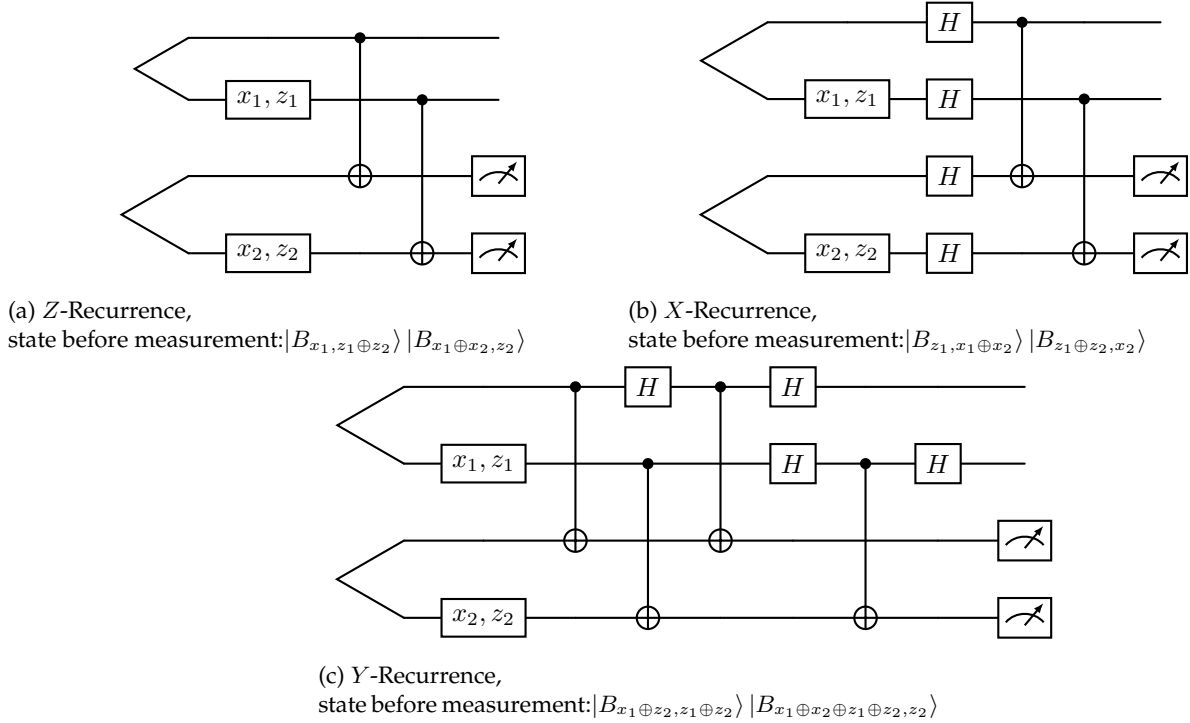
\subsubsection{Greedy Recurrence}
The essence of the recurrence protocol proposed in \cite{BDSW96} is to apply
rounds of parity checks, where the accepted states of one stage are passed on
to a recurrence check in the next stage, such that effective Pauli channel
improves gradually. However, only applying $Z$-recurrence checks will not
necessarily produce a guaranteed improvement. For example, one round of
$Z-$recurrence applied to the Pauli channel $(p_I=0.7,p_Z=0.3, p_X=p_Y=0)$
would give a channel with worse $F_e$ of $p_I^\prime =0.58.$ In \cite{BDSW96},
the authors describe a recurrence protocol with a modification suggested by C.
Macchiavello, where rounds of $Z$-recurrence are carried out, with each round
being followed by a bilateral $B_x$ rotation. Note that applying a bilateral
rotation $B_x$ followed by a round of $Z$-recurrence then applying a second
bilateral rotation $B_x$ to the accepted state results in a state with a
distribution as that produced by a round of $Y$-recurrence. This can be seen by
applying the permutation of the input Pauli noise vector according to Table
\ref{tab:bilateral_rotation} in Appendix \ref{app:Bilateral_rots} to the
probability distribution (\ref{eqn:accepted_distribution}), then applying the
same permutation to the output vector. The bilateral rotation $B_x$ exchanges
$p_I$ with $p_X$.  Thus, overall, the Macchiavello recurrence is equivalent to
interleaving rounds of $Z$ and $Y$-recurrence.

In what follows, we introduce a more structured and justified method for
choosing which type of $Q$-recurrence is to be carried out at any stage of the
recurrence procedure, called \emph{Greedy recurrence}. 

Greedy recurrence applies $Q$-recurrence whenever the input Pauli channel has
$p_Q$ as the smallest component, $Q\in\{X,Y,Z\}$. At each stage, we check if a
higher yield can be obtained by directly switching over to the protocol of
\cite{VV05} than that obtained by more rounds of recurrence first. 

The Greedy recurrence protocol is well-motivated by the following theorems that
show that it provides a guaranteed improvement of the entanglement fidelity at
each step, as well as providing the largest improvement in entanglement
fidelity among all choices of $Q$-recurrence steps.

\begin{restatable}{theorem}{guaranteedimprovement}\label{thm:guarantee_improvement}
    For any Pauli channel with Pauli error probability $[p_I,p_X,p_Y,p_Z]$,
    with $1/2<p_I<1$, the accepted state after one step of Greedy recurrence
    has a strict fidelity improvement with $p_I^\prime > p_I.$
\end{restatable}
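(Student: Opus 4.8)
The plan is to read $p_I'$ directly off the explicit accepted-state distribution~(\ref{eqn:accepted_distribution}) and reduce the claim to a one-variable inequality. By definition, Greedy recurrence performs $Q_i$-recurrence for the index $i$ with $p_{Q_i}$ the smallest of the three non-identity probabilities, and~(\ref{eqn:accepted_distribution}) together with the stated pass probability gives
\[
p_I' = \frac{p_I^2 + p_{Q_i}^2}{p_{\text{pass}}(Q_i)}, \qquad p_{\text{pass}}(Q_i) = (p_I+p_{Q_i})^2 + (p_{Q_{i+1}}+p_{Q_{i+2}})^2 .
\]
First I would observe that $p_{\text{pass}}(Q_i) \ge p_I^2 > 0$, so clearing the denominator is legitimate and the claim $p_I' > p_I$ is equivalent to $p_I^2 + p_{Q_i}^2 > p_I\, p_{\text{pass}}(Q_i)$.

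The key step is a short algebraic simplification. Set $u := p_I + p_{Q_i}$ and use $p_{Q_{i+1}}+p_{Q_{i+2}} = 1-u$; expanding $p_I^2 + p_{Q_i}^2 - p_I\bigl(u^2 + (1-u)^2\bigr)$ and substituting $p_{Q_i} = u - p_I$, the cross terms cancel and one is left with the clean identity
\[
p_I^2 + p_{Q_i}^2 - p_I\, p_{\text{pass}}(Q_i) = (2p_I - 1)\,(p_I - u^2).
\]
Since $1/2 < p_I < 1$ forces $2p_I - 1 > 0$, everything reduces to showing $p_I > u^2$, i.e. $p_I + p_{Q_i} < \sqrt{p_I}$, i.e. $p_{Q_i} < \sqrt{p_I} - p_I$.

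For this last inequality I would invoke the Greedy choice: $p_{Q_i}$ is the minimum of three nonnegative numbers summing to $1-p_I$, hence $p_{Q_i} \le (1-p_I)/3$, so it suffices to prove $(1-p_I)/3 < \sqrt{p_I} - p_I$ for $p_I \in (1/2,1)$. With $g(p) := \sqrt{p} - p - (1-p)/3$ one checks $g(1/2) = \tfrac{1}{\sqrt 2} - \tfrac23 > 0$ and $g(1) = 0$, and since $g'(p) = \tfrac{1}{2\sqrt p} - \tfrac23$ vanishes only at $p = 9/16$, the function $g$ increases then decreases on $(1/2,1)$ and therefore stays strictly above $\min\{g(1/2), g(1)\} = 0$ on the whole interval. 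Chaining $p_{Q_i} \le (1-p_I)/3 < \sqrt{p_I} - p_I$ gives $p_I - u^2 > 0$, hence $p_I' > p_I$ strictly, as claimed.

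I do not anticipate a real obstacle: the argument is a substitution into~(\ref{eqn:accepted_distribution}), one cancellation producing the factored identity, and an elementary monotonicity check. The only point that genuinely requires care is verifying $g > 0$ on the \emph{open} interval, in particular for $p_I$ near $1$ where $g(1) = 0$; this is why one analyses the sign of $g'$ rather than merely evaluating endpoints. It is also worth noting in the write-up that the strictness in the conclusion comes from the strict inequality $(1-p_I)/3 < \sqrt{p_I} - p_I$, which holds even when the Greedy minimum is attained with equality $p_{Q_i} = (1-p_I)/3$.
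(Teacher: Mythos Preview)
Your argument is correct, and it follows a genuinely different route from the paper's proof. The paper parametrizes $p_I = 1/2 + \epsilon$ and $p_{Q_i} = (1/2 - \delta)/3$ with $\epsilon \le \delta \le 1/2$, then works out $p_I' - p_I = \epsilon\,(\alpha - \gamma)/(\beta + \gamma)$ for explicit polynomials $\alpha,\beta,\gamma$ in $\epsilon,\delta$ and bounds $\gamma < \alpha$ by a short chain of inequalities, handling the boundary case $\epsilon = \delta$ separately. Your approach instead clears the denominator and spots the clean factorization $p_I^2 + p_{Q_i}^2 - p_I\,p_{\text{pass}}(Q_i) = (2p_I - 1)(p_I - u^2)$, reducing everything to the one-variable inequality $(1-p_I)/3 < \sqrt{p_I} - p_I$ on $(1/2,1)$. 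The factored identity makes the structure of the result transparent (one factor is exactly the hypothesis $p_I > 1/2$, the other isolates the role of the Greedy choice), and the resulting calculus check is more elementary than the paper's two-parameter bounding; conversely, the paper's parametrization tracks more explicitly how the improvement scales with $\epsilon$, which could be useful if one later wanted quantitative rates.
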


\begin{restatable}{theorem}{optimalityofgreedyrecurrence}(Optimality of Greedy recurrence for increasing fidelity) \label{thm:optimality_of_greedy_recurrence}
Let the input to a recurrence step be the output of two i.i.d. uses of a Pauli
    noise channel to send two halves of perfect Bell pairs from Alice to Bob,
    yielding the shared state $\left(\sum_{Q\in\{I,X,Y,Z\}}p_Q Q_B
    \ketbra{B_{00}}_{AB} Q_B\right)^{\otimes 2}.$ Assume that $p_I>1/2$ and $
    p_{Q^\ast}=\min_{Q\in\{I,X,Y,Z\}} p_Q.$ Let $p_{I,Q^\ast}^\prime$ be the
    entanglement fidelity after applying $Q^\ast$-recurrence, while
    $p_{I,\Tilde{Q}}^\prime$  the entanglement fidelity after applying
    $\Tilde{Q}$-recurrence, for $\Tilde{Q} \neq Q^\ast \in\{X,Y,Z\}.$ Then,
    $p_{I,Q^\ast}^\prime \geq p_{I,\Tilde{Q}}^\prime. $    
\end{restatable}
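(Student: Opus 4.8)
The plan is to collapse the three-way comparison into a one-dimensional monotonicity statement. First note that since $p_I>1/2$ the other three probabilities sum to $1-p_I<1/2$, so each is smaller than $p_I$ and hence $Q^\ast\in\{X,Y,Z\}$; there is no loss in treating $Q^\ast$ and $\tilde Q$ as non-identity Paulis. The key observation I would start from is that the accepted-state entanglement fidelity after $Q$-recurrence, read off from (\ref{eqn:accepted_distribution}), depends on the input channel only through $p_I$ and $p_Q$. Writing $Q=Q_i$ and using $p_{Q_{i+1}}+p_{Q_{i+2}}=(p_X+p_Y+p_Z)-p_Q=(1-p_I)-p_Q$, the pass probability becomes $p_{\text{pass}}(Q)=(p_I+p_Q)^2+(1-p_I-p_Q)^2$, so the accepted fidelity is $p_{I,Q}'=f(p_Q)$ with
$$f(q)=\frac{p_I^2+q^2}{(p_I+q)^2+(1-p_I-q)^2}.$$
Since $p_Q$ is one of three nonnegative numbers summing to $1-p_I$, the relevant domain is $q\in[0,\,1-p_I]\subseteq[0,1]$. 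The theorem then reduces to showing that $f$ is nonincreasing on $[0,1-p_I]$: granting this, $p_{Q^\ast}=\min_{Q}p_Q\le p_{\tilde Q}$ immediately gives $p_{I,Q^\ast}'=f(p_{Q^\ast})\ge f(p_{\tilde Q})=p_{I,\tilde Q}'$.

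Next I would carry out the derivative computation. Writing $f=N/D$ with $N=p_I^2+q^2$ and $D=(p_I+q)^2+(1-p_I-q)^2$, the sign of $f'(q)$ equals that of $N'D-ND'$; expanding and using $p_I+(1-p_I)=1$ to simplify, I expect this to collapse to
$$N'D-ND'=2(2p_I-1)\bigl(q^2-q-p_I^2\bigr).$$
The factor $2p_I-1$ is positive because $p_I>1/2$, and for $q\in[0,1]$ we have $q^2-q=q(q-1)\le 0$, hence $q^2-q-p_I^2<0$ on the whole domain. Therefore $f'<0$ on $[0,1-p_I]$, so $f$ is in fact strictly decreasing there, which proves the reduced claim and also shows the conclusion is strict unless $p_{Q^\ast}=p_{\tilde Q}$.

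The only step I expect to require care is the algebraic verification that $N'D-ND'$ factors exactly as stated; everything else follows immediately from the reduction to the single variable $q$, and the restriction $q\le 1-p_I$ is precisely what keeps $q^2-q-p_I^2$ from changing sign on the domain. A similar single-variable analysis of $f$---comparing $f(p_{Q^\ast})$ with $p_I$ itself rather than comparing two values of $f$---is also the natural route to Theorem \ref{thm:guarantee_improvement}.
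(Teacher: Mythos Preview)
Your argument is correct: the factorization $N'D-ND'=2(2p_I-1)(q^2-q-p_I^2)$ checks out by direct expansion, and the sign analysis is immediate since $q(q-1)\le 0$ on $[0,1]$.

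Your route is genuinely different from the paper's. The paper writes the difference $p_{I,Q^\ast}'-p_{I,\tilde Q}'$ over a common denominator and aims to show the resulting numerator
\[
\bigl[p_I^2+p_{Q^\ast}^2\bigr]\,\bigl[\beta^2+(1-\beta)^2\bigr]\;-\;\bigl[p_I^2+p_{\tilde Q}^2\bigr]\,\bigl[\alpha^2+(1-\alpha)^2\bigr]
\]
(with $\alpha=p_I+p_{Q^\ast}$, $\beta=p_I+p_{\tilde Q}$) is nonnegative by bounding the two bracketed factors in each product separately. Your monotonicity reduction is cleaner: by recognising that the accepted fidelity depends on the channel only through the single variable $q=p_Q$ and differentiating, you obtain an explicit factored expression whose sign is transparent, and as a bonus you see the inequality is strict unless $p_{Q^\ast}=p_{\tilde Q}$. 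The paper's factor-by-factor comparison rests on the ordering assertion $\alpha\le\beta\le p_I$, but in fact $\alpha,\beta\ge p_I$ (since $\alpha=p_I+p_{Q^\ast}\ge p_I$), which reverses one of the two factor inequalities it invokes and leaves the product comparison unresolved as written; your derivative argument sidesteps this delicacy entirely.
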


The proofs of Theorems \ref{thm:guarantee_improvement} and
\ref{thm:optimality_of_greedy_recurrence} are provided in Appendix
\ref{app:proofs_thms_greedy}.

\paragraph{Greedy Recurrence applied to the Depolarizing Channel}
We show that Greedy recurrence and Macchiavello recurrence apply equivalent
operations (up to a final Bilateral rotation) in the case of the depolarizing
channel. 

\begin{restatable}{theorem}{MacVsGreedy}
    For the depolarizing channel with Pauli error probability
    $[1-p,p/3,p/3,p/3]$, with $0 < p < 1/2$, the sequence of alternating
    recurrence checks $Z,Y,Z,Y,\dots$ where $Z$-recurrence is performed, then
    $Y$-recurrence is performed and so on, is a valid sequence for Greedy
    recurrence. 
\end{restatable}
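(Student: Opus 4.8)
The plan is to change coordinates so that ``perform $Q$-recurrence'' becomes ``act on the smallest of three scalars'', rewrite the recurrence steps as explicit rational maps, and then track the orbit of the alternating sequence by induction.

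\emph{Coordinates and maps.} For a one-qubit Bell-diagonal state with Pauli-error probabilities $[p_I,p_X,p_Y,p_Z]$, set $\lambda_Q := p_I+p_Q-p_{Q'}-p_{Q''}$ for $Q\in\{X,Y,Z\}$ (a signed $\langle Q\otimes Q\rangle$); this inverts to $p_Q=\tfrac14\big(1-(\lambda_X+\lambda_Y+\lambda_Z)+2\lambda_Q\big)$, so at any fixed state $p_Q$ is smallest exactly when $\lambda_Q$ is smallest — a Greedy step is precisely ``do $Q$-recurrence for some $Q$ minimizing $\lambda_Q$''. A short computation from the accepted distribution \eqref{eqn:accepted_distribution} turns the recurrence steps into
\[
Z\text{-rec.}: (\lambda_X,\lambda_Y,\lambda_Z)\mapsto\frac{(\lambda_X^2+\lambda_Y^2,\ 2\lambda_X\lambda_Y,\ 2\lambda_Z)}{1+\lambda_Z^2},
\qquad
Y\text{-rec.}: (\lambda_X,\lambda_Y,\lambda_Z)\mapsto\frac{(\lambda_X^2+\lambda_Z^2,\ 2\lambda_Y,\ 2\lambda_X\lambda_Z)}{1+\lambda_Y^2};
\]
each map sends valid Bell-diagonal states to valid Bell-diagonal states, preserves positivity of all four $p_Q$ (whence $|\lambda_Q|<1$) and, by a sign check, of all three $\lambda_Q$, the ``special'' coordinate being pushed toward $1$ via $\lambda\mapsto 2\lambda/(1+\lambda^2)$ and the other two forming an AM--GM pair (equivalently, $Q$-recurrence squares the odds ratio $(1+\lambda_Q)/(1-\lambda_Q)$). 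For the depolarizing channel $\lambda_X=\lambda_Y=\lambda_Z=1-\tfrac{4p}{3}=:\lambda_0$, and $0<p<\tfrac12\Leftrightarrow\lambda_0\in(\tfrac13,1)$; in particular all three scalars coincide, so the opening $Z$-recurrence is a legal tie-break, and applying it gives $\lambda_X=\lambda_Y=\tfrac{2\lambda_0^2}{1+\lambda_0^2}<\tfrac{2\lambda_0}{1+\lambda_0^2}=\lambda_Z$ with, by direct substitution, $\lambda_X^2+\lambda_Z^2=2\lambda_Y$.

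\emph{Induction.} I would show an invariant propagates along $Z,Y,Z,Y,\dots$: all $\lambda_Q\in(0,1)$; after a $Z$-step the order $(\mathrm{min},\mathrm{mid},\mathrm{max})=(\lambda_Y,\lambda_X,\lambda_Z)$ with $\lambda_Y\le\lambda_X<\lambda_Z$, and after a $Y$-step $(\mathrm{min},\mathrm{mid},\mathrm{max})=(\lambda_Z,\lambda_X,\lambda_Y)$ with $\lambda_Z\le\lambda_X<\lambda_Y$; together with an auxiliary inequality, at a minimum $(\mathrm{mid})^2+(\mathrm{max})^2\le 2\,(\mathrm{min})$, which already forces $(\mathrm{mid})(\mathrm{max})\le(\mathrm{min})$ by AM--GM. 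The ordering clause alone shows the next prescribed check ($Y$ after a $Z$-step, $Z$ after a $Y$-step) acts on a minimal Pauli component, so the invariant implies the theorem; the base case (after step $1$) is exactly the last display. For the inductive step one applies the relevant map, clears the common positive denominator ($1+\lambda_Y^2$ after a $Y$-step, $1+\lambda_Z^2$ after a $Z$-step), and verifies each clause as a polynomial inequality under the hypotheses. For instance, after a $Y$-step $\lambda_Z'\le\lambda_X'\le\lambda_Y'$ with $\lambda_Z'<\lambda_X'$ reads $2\lambda_X\lambda_Z\le\lambda_X^2+\lambda_Z^2\le 2\lambda_Y$ (AM--GM, strict since $\lambda_X<\lambda_Z$, then the hypothesis), and $\lambda_X'\lambda_Y'\le\lambda_Z'$ reads $(\lambda_X^2+\lambda_Z^2)\lambda_Y\le\lambda_X\lambda_Z(1+\lambda_Y^2)$, which factors as $\lambda_X\lambda_Z\big(\lambda_Y-\tfrac{\lambda_X}{\lambda_Z}\big)\big(\lambda_Y-\tfrac{\lambda_Z}{\lambda_X}\big)\ge0$ and holds because $\lambda_Y<1$ and $\lambda_X<\lambda_Z$ place $\lambda_Y$ to the left of both roots; the $Z$-step half is identical after exchanging the roles of $Y$ and $Z$.

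\emph{Main obstacle.} The crux is that the bare ordering invariant does not reproduce itself, so one must drag an auxiliary inequality along, and re-establishing it after each step is the delicate point: it is tight in the base case (the opening $Z$-step lands on the locus $\lambda_X^2+\lambda_Z^2=2\lambda_Y$) and its slack shrinks to $0$ as $\lambda_Q\to1$, so the auxiliary inequality must be chosen just strong enough to close the induction yet still preserved by both rational maps. Pinning down that exact form and checking its invariance — a short list of polynomial inequalities, the most stubborn being the one that regenerates $(\mathrm{mid})^2+(\mathrm{max})^2\le 2\,(\mathrm{min})$ after the step — is the bulk of the work.
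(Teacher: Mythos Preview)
Your $\lambda$-coordinate reformulation is a genuinely different (and cleaner) presentation than the paper's, but the algebraic core turns out to be the same. The key point you are missing is that the \emph{weaker} auxiliary $(\mathrm{mid})(\mathrm{max})\le(\mathrm{min})$ already closes the induction on its own: from $\lambda_Y=\min$ and $\lambda_X\lambda_Z\le\lambda_Y$ before a $Y$-step one gets $\lambda_Z'\le\lambda_X'$ by AM--GM, $\lambda_Z'\le\lambda_Y'$ from $\lambda_X\lambda_Z\le\lambda_Y$, and your factorization regenerates $\lambda_X'\lambda_Y'\le\lambda_Z'$. So the ``main obstacle'' you identify --- regenerating the stronger $(\mathrm{mid})^2+(\mathrm{max})^2\le 2(\mathrm{min})$ --- is a red herring; you never need it. Moreover, rewriting your factorization as
\[
\lambda_X\lambda_Z\Bigl(\lambda_Y-\tfrac{\lambda_X}{\lambda_Z}\Bigr)\Bigl(\lambda_Y-\tfrac{\lambda_Z}{\lambda_X}\Bigr)
=(\lambda_X-\lambda_Y\lambda_Z)(\lambda_Z-\lambda_X\lambda_Y),
\]
and using the identity $\lambda_Q-\lambda_{Q'}\lambda_{Q''}=4(p_Ip_Q-p_{Q'}p_{Q''})$, your key inequality becomes $(p_Ip_X-p_Yp_Z)(p_Ip_Z-p_Xp_Y)\ge 0$ --- precisely the paper's Lemma~\ref{lem:ramgm} factorization $(ad-bc)(ac-bd)\ge 0$. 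The paper packages this as a two-step-back induction in $p$-coordinates; you package it as a one-step invariant in $\lambda$-coordinates. What your coordinates buy is that the recurrence maps and the invariant are visibly symmetric and the ``min is where $\lambda_Q$ is smallest'' translation is immediate; what the paper's packaging buys is that the needed ordering $a\ge c\ge d\ge b$ is literally the inductive hypothesis, with no auxiliary to track.

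Two small corrections. First, your strict inequality $\mathrm{mid}<\mathrm{max}$ fails already at step~$2$ for the depolarizing channel: once $\lambda_X=\lambda_Y$ holds (as it does after step~$1$), both the $Z$- and $Y$-maps preserve $\lambda_X=\lambda_Y$, so you get $\lambda_X=\lambda_Y$ at every subsequent step; relax to $\le$ throughout and nothing breaks. Second, your justification ``$\lambda_Y<1$ and $\lambda_X<\lambda_Z$ place $\lambda_Y$ to the left of both roots'' is incomplete: the smaller root is $\lambda_X/\lambda_Z$, and $\lambda_Y<1$ alone does not force $\lambda_Y\le\lambda_X/\lambda_Z$. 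What does is the ordering hypothesis $\lambda_Y\le\lambda_X$ together with $\lambda_Z<1$, giving $\lambda_Y\lambda_Z<\lambda_Y\le\lambda_X$.
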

The proof is provided in Appendix \ref{app:proofs_thms_greedy}.

\paragraph{Greedy Recurrence applied to the $XZ$ Channel}
While Greedy recurrence reproduces the Macchiavello recurrence in the case of
the depolarizing channel, in the case of the $XZ$ channel of equation
(\ref{eq:XZchannel}), Greedy recurrence is a more useful heurisitic. 
\begin{equation}
    \rho\rightarrow (1-p)\rho+\frac{p}{2} (X\rho X+Z\rho Z).\label{eq:XZchannel}
\end{equation}
In Figure \ref{fig:MacVsGreedy}, we plot the yield obtained by applying Greedy
recurrence followed by the protocol of \cite{VV05}, and compare it to the yield
of applying the Macchiavello recurrence, again followed by the protocol of
\cite{VV05}. From the plot, it is clear that Greedy recurrence provides a more
general heuristic and better yields in the case of the $XZ$ channel.
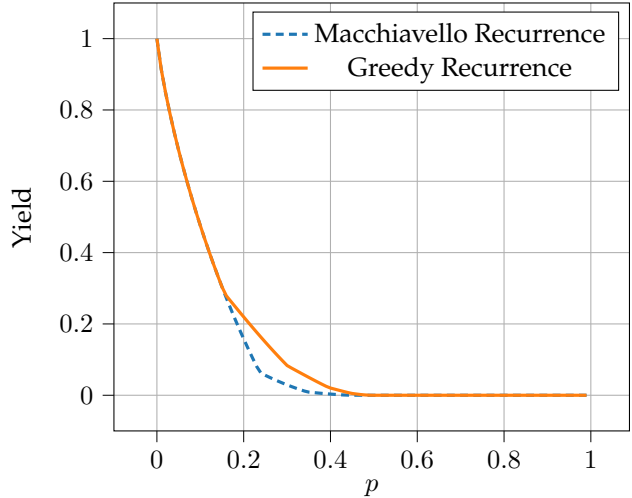
\begin{figure}[htb]
    \centering
    \begin{tikzpicture}
\definecolor{darkgray176}{RGB}{176,176,176}
\definecolor{darkorange25512714}{RGB}{255,127,14}
\definecolor{steelblue31119180}{RGB}{31,119,180}
    \begin{axis}[
tick align=outside,
tick pos=left,
x grid style={darkgray176},
xlabel={$p$},
xmajorgrids,
xtick style={color=black},
y grid style={darkgray176},
ylabel={Yield},
ymajorgrids,
ytick style={color=black}
]        \addplot[very thick,densely dashed, steelblue31119180] table [x=p,y=Macchiavello]{figz/MacVsGreedy.txt};
    
        \addplot[very thick, darkorange25512714] table [x=p,y=Greedy]{figz/MacVsGreedy.txt};
      \addlegendentry{Macchiavello Recurrence};
      \addlegendentry{Greedy Recurrence};
    \end{axis}
\end{tikzpicture}
    \caption{Comparison of the yields of the Macchiavello recurrence and Greedy
    recurrence in the case of the $XZ$ channel.}
    \label{fig:MacVsGreedy}
\end{figure}

\subsubsection{Further Analysis of the Interpolation of Recurrence and Hashing
Protocol \cite{VV05}}
It is clear from equation (\ref{eq:VVbetterthanH_iid}), that the yield of the
protocol \cite{VV05} is higher than that of hashing when the input to the
protocol is the output of independent and identically distributed uses of a
Pauli noise channel. However, in our proposed protocol, the input to the
\cite{VV05} in some cases is a correlated state of 2 shared Bell-diagonal
pairs. In the following theorem we verify that for permutation invariant states
over the qubits held by Bob of a 4 qubit shared Bell-diagonal state, such as
the output of the $\texttt{AEPP}^*(4)$ protocol when the syndrome bits
$b_1=b_2=0$, the \cite{VV05} gives an advantage over hashing.  
\begin{theorem}\label{thm:VVperm_inv}
Given a general 2-qubit Bell-diagonal state, $$\rho_2 = \sum_{Q_1,Q_2 \in \mathcal{P}^{\otimes 2}}p_{Q_1,Q_2} Q_1\otimes Q_2 \ketbra{B_{00}}^{\otimes 2} Q_1\otimes Q_2,$$ with $\{p_{Q_1,Q_2}\}$ a probability distribution over 2-qubit Paulis. If $\{p_{Q_1,Q_2}\}$ is invariant under permutation of $Q_1$ and $Q_2$, then 
$\text{Yield}_{VV}(\rho_2)\geq \text{Yield}_{\text{Hashing}}(\rho_2).$
\end{theorem}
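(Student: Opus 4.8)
The plan is to compare the two yields term by term, using only the chain rule for entropy together with one structural consequence of permutation invariance. The key observation at the outset is that every term of $\text{Yield}_{VV}$ built from $\rho_{\text{even}}$, from $h_b(p_{\text{even}})$, and from the ``even half'' of $\rho_2$ has an exact counterpart in $\text{Yield}_{\text{Hashing}}$, so the comparison reduces to a statement purely about the odd branch. Concretely, since the even/odd AEM check is projective and the Bell states indexed by $\mathcal{P}_{\text{even}}$ and $\mathcal{P}_{\text{odd}}$ span mutually orthogonal subspaces, the post-measurement states $\rho_{\text{even}},\rho_{\text{odd}}$ have orthogonal supports, so the grouping property of von Neumann entropy gives
$$S(\rho_2)=h_b(p_{\text{even}})+p_{\text{even}}S(\rho_{\text{even}})+p_{\text{odd}}S(\rho_{\text{odd}}).$$
Substituting this into $\text{Yield}_{\text{Hashing}}(\rho_2)=1-\tfrac12 S(\rho_2)$, subtracting from \eqref{eq:yieldVV}, and using $p_{\text{even}}+p_{\text{odd}}=1$ and $p_0+p_1=p_{\text{odd}}$, the $h_b$ and $\rho_{\text{even}}$ contributions cancel and one is left with
$$\text{Yield}_{VV}(\rho_2)-\text{Yield}_{\text{Hashing}}(\rho_2)=\tfrac12\Big(p_{\text{odd}}S(\rho_{\text{odd}})-p_0 S(\rho_{\text{odd},0})-p_1 S(\rho_{\text{odd},1})-p_{\text{odd}}\Big),$$
so it suffices to prove $p_{\text{odd}}S(\rho_{\text{odd}})\ge p_{\text{odd}}+p_0 S(\rho_{\text{odd},0})+p_1 S(\rho_{\text{odd},1})$.

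Next I would read $\rho_{\text{odd}}$ classically: it is Bell-diagonal with error pair $(P_1,P_2)$ distributed as $\{p_{Q_1,Q_2}\}$ conditioned on $\mathcal{P}_{\text{odd}}$, so $S(\rho_{\text{odd}})=H(P_1,P_2\mid\text{odd})$. The $ZI$ BPM reads off the bit $b=\mathbf{1}[P_1\in\{X,Y\}]$ and consumes the first pair, leaving the second pair with error law $\Pr[P_2=\cdot\mid b,\text{odd}]$; hence $p_0 S(\rho_{\text{odd},0})+p_1 S(\rho_{\text{odd},1})=p_{\text{odd}}\,H(P_2\mid b,\text{odd})$. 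Since $(b,P_2)$ is a deterministic coarse-graining of $(P_1,P_2)$, the chain rule yields $H(P_1,P_2\mid\text{odd})\ge H(b\mid\text{odd})+H(P_2\mid b,\text{odd})$, so the desired inequality follows once I show $H(b\mid\text{odd})=1$, i.e.\ $p_0=p_1$.

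This last step is where permutation invariance is used, and it is the only place it is needed. Membership in $\mathcal{P}_{\text{odd}}$ means exactly one of $P_1,P_2$ lies in $\{X,Y\}$, an event symmetric under the swap $(P_1,P_2)\mapsto(P_2,P_1)$; within this event $b=0$ corresponds to $P_1\in\{I,Z\},\,P_2\in\{X,Y\}$ and $b=1$ to $P_1\in\{X,Y\},\,P_2\in\{I,Z\}$, and the swap carries the first event exactly onto the second. Invariance of $\{p_{Q_1,Q_2}\}$ under the swap therefore forces $p_0=p_1=p_{\text{odd}}/2$, hence $H(b\mid\text{odd})=1$, which closes the chain and proves the theorem.

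The main obstacle — really the single insight the argument hinges on — is recognizing that the ``$-p_{\text{odd}}$'' penalty surviving all the cancellations is exactly one bit of information, namely the $ZI$ syndrome bit $b$, and that permutation symmetry is precisely what makes that bit equiprobable. Everything else is routine: the orthogonality of the even/odd subspaces, the entropy chain rule, and the identification of $\rho_{\text{odd},0},\rho_{\text{odd},1}$ as single-pair states whose entropies are the binary entropies produced by the $ZI$ check. A minor point to get right is the normalization convention (all yields measured per Bell pair); as a consistency check, specializing the displayed gap to $\rho_2=\rho^{\otimes 2}$ should recover the correction term $\tfrac{p_{\text{odd}}}{4}\big(h_b(\tfrac{p_I}{p_I+p_Z})+h_b(\tfrac{p_X}{p_X+p_Y})\big)$ of \eqref{eq:VVbetterthanH_iid}.
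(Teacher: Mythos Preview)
Your proof is correct and essentially identical to the paper's: both decompose $S(\rho_2)$ via the even/odd direct sum (the paper's Lemma~\ref{lem:entropy_of_direct_sum}), use permutation invariance solely to force $p_0=p_1$ (equivalently $h_b(p_0/p_{\text{odd}})=H(b\mid\text{odd})=1$), and then obtain nonnegativity of the remaining gap by an entropy-coarsening step---the paper's explicit $\Delta_0,\Delta_1\ge 0$ via Lemma~\ref{lem:entropy_reduction} is exactly your chain-rule inequality $H(P_1,P_2\mid b,\text{odd})\ge H(P_2\mid b,\text{odd})$. The only difference is cosmetic: you phrase the last step information-theoretically while the paper writes out the binary-entropy expressions for $\Delta_0$ and $\Delta_1$.
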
   
\begin{proof}
Using Lemma \ref{lem:entropy_of_direct_sum} twice,
    \begin{align}S(\rho_2) =& p_{\text{even}}S\left(\rho_{\text{even}}\right)+p_{\text{odd}}S\left(\rho_{\text{odd}}\right)+h_b(p_{\text{even}}) \\
    =& p_{\text{even}}S\left(\rho_{\text{even}}\right)+p_0 h\biggl(\frac{[p_{ZX},p_{IX},p_{ZY},p_{IY}]}{p_0}\biggr)
    +p_1 h\biggl(\frac{[p_{XZ},p_{XI},p_{YZ},p_{YI}]}{p_1}\biggr)\\&+p_{\text{odd}}h_b\biggl(\frac{p_0}{p_{\text{odd}}}\biggr)+h_b(p_{\text{even}})\label{eQ:Srho2}
    \end{align}
For a permutation-invariant distribution, $p_0=p_1$ and $h_b(p_0/{p_{\text{odd}}}) =1.$
Thus, the yield of the hashing protocol applied to $\rho_2$ normalized per pair is: 
\begin{align*}
    \text{Yield}_{\text{Hashing}}(\rho_2) =& 1-S(\rho_2)/2  \\=&p_{\text{even}}+p_0/2+p_1/2+p_{\text{odd}}/2 - S(\rho_2)/2\\=
    &p_{\text{even}}\left[1-S\left(\rho_{\text{even}}\right)\right]-h_b(p_{\text{even}})/2\\&+\frac{p_0}{2}\left[1- h\left(\frac{[p_{ZX},p_{IX},p_{ZY},p_{IY}]}{p_0}\right)\right]
    +\frac{p_1}{2}\left[1-h\biggl(\frac{[p_{XZ},p_{XI},p_{YZ},p_{YI}]}{p_1}\biggr)\right]
\end{align*}

    The first equality is due to the fact that $p_{\text{even}}+p_{\text{odd}} =1$ and $p_0+p_1 = p_{\text{odd}}$, while the second equality is obtained by substituting $S(\rho_2)$ using (\ref{eQ:Srho2}). 
Then, the difference of the rates achieved by the schemes is 
\begin{align*}\text{Yield}_{VV}(\rho_2)- \text{Yield}_{\text{Hashing}}(\rho_2)=& \frac{p_0}{2}\Delta_0+
\frac{p_1}{2}\Delta_1, 
\end{align*}
where \begin{align*}
\Delta_0&=h\biggl(\frac{[p_{ZX},p_{IX},p_{ZY},p_{IY}]}{p_0}\biggr)-S\left(\rho_{\text{odd},0}\right)
\\&\overset{(\dagger)}{=}
\frac{p_{ZX}+p_{IX}}{p_0}h_b\left(\frac{p_{ZX}}{p_{ZX}+p_{IX}}\right)+\frac{p_{ZY}+p_{IY}}{p_0}h_b\left(\frac{p_{ZY}}{p_{ZY}+p_{IY}}\right)
\end{align*}
\begin{align*}\Delta_1 &=h\biggl(\frac{[p_{XZ},p_{XI},p_{YZ},p_{YI}]}{p_1}\biggr)-S\left(\rho_{\text{odd},1}\right)
\\&\overset{(*)}{=}
\frac{p_{XZ}+p_{YZ}}{p_1}h_b\left(\frac{p_{XZ}}{p_{XZ}+p_{YZ}}\right)+\frac{p_{XI}+p_{YI}}{p_1}h_b\left(\frac{p_{XI}}{p_{XI}+p_{YI}}\right),
\end{align*}
where the equalities $(\dagger)$ and $(*)$ follow by using Lemma \ref{lem:entropy_reduction}.
Since $\Delta_0,\Delta_1 \geq 0$,
    $\text{Yield}_{VV}(\rho_2)\geq\text{Yield}_{\text{Hashing}}(\rho_2)$.  In
    this case, we may express $\text{Yield}_{VV}(\rho_2)$
\begin{equation*}
    \text{Yield}_{VV}(\rho_2) =
    1-S(\rho_2)/2+\frac{p_{\text{odd}}}{4}(\Delta_0+\Delta_1)
\end{equation*}
\end{proof}

\paragraph{Further optimization}
We observe that the expression in (\ref{eq:VVbetterthanH_iid}) gives different values for different permutations of the probability vector $[p_I,p_X,p_Y,p_Z]$. This can be exploited, by applying bilateral rotations and local Pauli operations to permute the vector to get the highest possible yield.

\subsubsection{Description of the Proposed Protocol and Results}
Here, we describe the steps of the proposed protocol that is built as a careful composition of the aforementioned protocols.
\begin{itemize}
    \item The input is $n$ shared noisy Bell pairs, that are generated as a
        result of sending $n$ halves of Bell pairs across $n$ i.i.d. uses of a
        Pauli channel. 
    \item Alice and Bob apply Greedy recurrence for $N_1$ iterations. At each iteration, $Q$-recurrence is applied, with $Q$ corresponding to the Pauli error with the smallest probability in the effective channel acting on the shared states. At each iteration, the rate is halved, and the possible yield is scaled by $p_{\text{pass}}/2$, while the effective Pauli channel improves, as more errors are caught. 
    \item The accepted states in the previous step are passed on to one of the two following possible branches, depending on which gives a higher yield: 
    \begin{enumerate}
        \item One choice is to apply bilateral rotations to the i.i.d. Bell-pairs, to maximize the yield, then pass on these pairs to the \cite{VV05} protocol, achieving a yield that can be computed via equation (\ref{eq:VVbetterthanH_iid}). 

        \item The other choice is to apply the  $\texttt{AEPP}^*(4)$ protocol. The accepted states from the Greedy recurrence step are grouped into groups of four, over which the check $ZZZZ$ is measured. \begin{itemize}
            \item States with errors that commute with $ZZZZ$ are subjected to
                the check $XXXX$, and if the error commutes with this check
                too, they are then passed on to the \cite{VV05} protocol. At
                this point, these are correlated states over two shared
                Bell-diagonal pairs. These correlations are useful, since the
                joint entropy of a probability distribution is smaller than the
                sum of entropies of the marginals $H(U_1,U_2)\leq
                H(U_1)+H(U_2)$.
            To take these correlations into consideration, we compute the yield of applying the \cite{VV05} protocol via the more general expression of (\ref{eq:yieldVV}). Note that the state over the 2 Bell pairs at this stage is permutation invariant with respect to Bob's qubits, and by Theorem \ref{thm:VVperm_inv}, the \cite{VV05} protocol gives a better yield than hashing in this case.
            \item States that fail the $ZZZZ$ check are subjected to a further check $ZZ$. If the check is passed, then the first Bell pair is passed on to a further $N_2$ iterations of Greedy recurrence, then to the \cite{VV05} protocol, achieving (\ref{eq:VVbetterthanH_iid}). 
            If the $ZZ$ check is failed, then the third Bell pair is passed on to a further $N_3$ iterations of Greedy recurrence, then to the \cite{VV05} protocol, achieving a yield computable by (\ref{eq:VVbetterthanH_iid}). 
        \end{itemize}
    \end{enumerate}
\end{itemize}

The numbers of iterations $N_1,N_2,N_3$ can be optimized to achieve the best
rate possible. In in Figure \ref{fig:proposed_prot}, we see an improved rate
advantage of our proposed protocol over the protocols from the literature
\cite{VV05,leung2007adaptive} for the depolarizing channel $\rho \rightarrow
(1-p)\rho+p/3(X \rho X+Y\rho Y+Z\rho Z)$. Note that the $\texttt{AEPP}(4)^\ast$
protocol already exhibits an advantage over (recurrence followed by \cite{VV05}) in the
intermediate noise regime, which can also be seen in the plots of
Ref.~\cite{leung2007adaptive}. Intuitively, this can be explained as follows, a
$Z$-recurrence step on 4 Bell pairs would catch errors such as $XIIX$, while
the first check of the $\texttt{AEPP}(4)^\ast$ protocol is $ZZZZ$ and would
commute with such an error. However, both protocols would catch any single
error on 4 pairs.  After applying the checks $ZZZZ,XXXX$ of the
$\texttt{AEPP}(4)^\ast$ protocol, the rate is halved, but single-qubit errors
in both amplitude and phase basis are caught. At the same cost of halving the
rate, the $Z$-recurrence protocols only checks for amplitude errors but manages
to catch more of them. Thus, we see that the $\texttt{AEPP}(4)^\ast$ protocol
has a better yield than recurrence in the intermediate noise regime but not in
the high noise regime.

Our proposed protocol succeeds in extending the range of this advantage of the $\texttt{AEPP}(4)^\ast$ protocol into the high noise regime, by preprocessing the highly noisy Pauli channel using recurrence, into a noise regime where the $\texttt{AEPP}(4)^\ast$ protocol has advantage once again.  
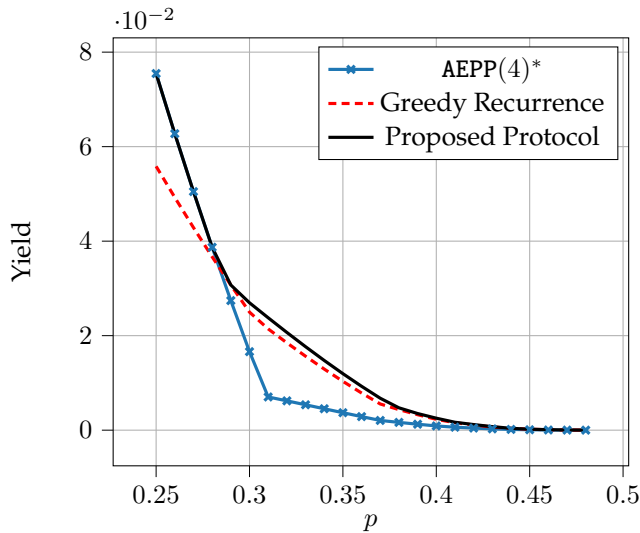
\begin{figure}[H]
    \centering
    \begin{tikzpicture}
\definecolor{darkgray176}{RGB}{176,176,176}
\definecolor{darkorange25512714}{RGB}{255,127,14}
\definecolor{steelblue31119180}{RGB}{31,119,180}
    \begin{axis}[
tick align=outside,
tick pos=left,
x grid style={darkgray176},
xlabel={$p$},
xmajorgrids,
xtick style={color=black},
y grid style={darkgray176},
ylabel={Yield},
ymajorgrids,
ytick style={color=black}
]        \addplot[very thick, steelblue31119180,mark=x] table [x=p,y=AEPP4]{figz/proposed_protocol_adv.txt};
 \addplot[very thick,densely dashed, red] table [x=p,y=Greedy]{figz/proposed_protocol_adv.txt};
  \addplot[very thick, black] table [x=p,y=ProposedProtocol]{figz/proposed_protocol_adv.txt};
      \addlegendentry{$\texttt{AEPP}(4)^*$};
        \addlegendentry{Greedy Recurrence};
              \addlegendentry{Proposed Protocol};

    \end{axis}
\end{tikzpicture}
    \caption{Comparison of the yields of Greedy recurrence+\cite{VV05}, $\texttt{AEPP}(4)^\ast$+\cite{VV05}, and the proposed protocol in the case of the depolarizing channel.}
    \label{fig:proposed_prot}
\end{figure}

\section{Discussion and Future Directions}\label{Sec:Disc}
In this work, we have provided important examples of rate improvements for
two-way assisted entanglement distillation. 
For the qubit depolarizing channel and other qubit Pauli channels,
our techniques give higher rates for entanglement distillation than the best known
well-established rates found previously. For the qubit amplitude damping channel,
our techniques gives higher rates than the channel's reverse coherent information,
a rate that has been the best known for over a decade.  
Moreover, the finite pre-processing protocols and codes we provide for the amplitude damping channel are applicable to practical non-asymptotic settings.
These results open new avenues for investigating better protocols for
distillation across a variety of channels. 

We have specifically focused on the high noise regime, where all asymptotic protocols require finite-size pre-processing parity checks to be carried out at the beginning of the protocol. Optimizing the size of the initial check used prior to an asymptotic protocol is an interesting open problem. For amplitude damping noise and Hamming-weight one codes, we have shown that the optimal check size is 3 .

Our improvements rely on shaping multiple uses of a noisy channel into an improved channel, and carefully studying how to concatenate different distillation
protocols. At each step of these protocols, we choose a distillation procedure
that is most suited to the effective noise channel shaped by the previous
stage.  Further investigation of connections between channel reshaping~(a
concept akin to channel simulation~\cite{hastings2014} but assisted by two-way
classical communication) and entanglement distillation may yield more insights. 

In the channel reshaping procedure, the effective channel can differ from the
constituent channel in two different ways. First, the reshaped channel need not
belong to the same channel family as the constituent channel~(see Sec.~\ref{sec:AmpDamp} where
several amplitude damping channels are reshaped into an erasure channel). This feature
can simplify the analysis if the properties of the reshaped channel are well known.
Second, a single use of the reshaped channel may have higher dimensions than  a single use of the constituent channel. The new noise model may result in  correlations between the constituent dimensions and thus may allow for additional (conceptual)~freedom for improving rates for entanglement distillation over the constituent channel.

One such concept is that the joint distribution in the case of correlated noise
has less entropy then the sum of entropies of the marginals. Thus, while it
might be more analytically difficult to deal with distributions that are
non-i.i.d., there might be gains from considering distillation protocols
that yield correlated output states. One direction to circumvent the difficulty of
studying general distributions might be to focus on protocols that generate
correlated states with a high degree of symmetry.

\section{Acknowledgments}

D.A. is grateful to IBM T.J. Watson Research Center's hospitality, where part of this work was carried out as part of an internship. T.J, V.S., and J.S, are supported by the U.S. Department of Energy, Office of Science, National Quantum
Information Science Research Centers, Co-design Center for Quantum Advantage (C2QA) contract (DESC0012704).

\printbibliography
\appendix
\section{Useful properties of Bell states}
\subsection{Ricochet property}
A useful property of Bell states is the Ricochet property (or the transpose trick) \cite{RicochetPty}.
\begin{lemma}(Ricochet Property)
For any $A\in \mathbbm{C}^{d\times d}$, and an orthonormal basis $\{\ket{x}\}$ of $\mathbbm{C}^d,$
$$A\otimes I \sum \ket{xx} = I\otimes A^T \sum \ket{xx},$$
where $T$ denotes the transpose of a matrix.
\label{lem:ricochet}
\end{lemma}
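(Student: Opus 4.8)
The plan is to reduce the identity to a coefficient comparison in the fixed orthonormal basis $\{\ket{x}\}$. First I would expand $A = \sum_{i,j} A_{ij}\,\ket{i}\bra{j}$ with matrix entries $A_{ij} = \langle i|A|j\rangle$, so that the transpose taken \emph{in this same basis} is $A^T = \sum_{i,j} A_{ij}\,\ket{j}\bra{i}$. Writing $\ket{\Omega}\coloneqq \sum_x \ket{x}\ket{x}$ for the unnormalized maximally entangled vector, the claim becomes $(A\otimes I)\ket{\Omega} = (I\otimes A^T)\ket{\Omega}$, and I would state at the outset the compatibility assumption that $A^T$ and $\ket{\Omega}$ are defined relative to the same basis, since this is where the content of the lemma really lives.

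Next I would evaluate each side directly. For the left-hand side, $(A\otimes I)\ket{\Omega} = \sum_x (A\ket{x})\otimes\ket{x} = \sum_x\sum_{i,j} A_{ij}\,\langle j|x\rangle\,\ket{i}\otimes\ket{x}$, and using $\langle j|x\rangle=\delta_{jx}$ this collapses to $\sum_{i,j} A_{ij}\,\ket{i}\otimes\ket{j}$. The analogous computation on the right-hand side gives $(I\otimes A^T)\ket{\Omega} = \sum_x \ket{x}\otimes(A^T\ket{x}) = \sum_x\sum_{i,j} A_{ij}\,\langle i|x\rangle\,\ket{x}\otimes\ket{j} = \sum_{i,j} A_{ij}\,\ket{i}\otimes\ket{j}$. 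Since both sides have identical coordinates $A_{ij}$ in the product basis $\{\ket{i}\otimes\ket{j}\}$, they coincide, which is exactly the statement. An equivalent basis-free packaging of the same steps is to verify, for each fixed $x$, that $(\bra{x}\otimes I)(A\otimes I)\ket{\Omega} = \sum_y A_{xy}\ket{y} = (\bra{x}\otimes I)(I\otimes A^T)\ket{\Omega}$, and then invoke $\sum_x\ket{x}\bra{x}=I$ on the first tensor factor to conclude the two vectors are equal.

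I do not expect any genuine obstacle here: once $A$ is written in components the proof is a one-line index manipulation, and there is no inequality, limit, or case analysis to manage. The only point worth flagging is conceptual rather than technical, namely that the transpose is a basis-dependent operation, so the identity is correct precisely because the transpose and the maximally entangled vector refer to the \emph{same} orthonormal basis; I would make this explicit so the reader does not read $A^T$ as a basis-independent object and so that later uses of the ricochet property in the paper can point to exactly which basis is in play.
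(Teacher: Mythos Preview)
Your proposal is correct and follows essentially the same approach as the paper: both expand $A$ in matrix components relative to the given basis, apply each side to $\sum_x\ket{xx}$, and obtain the common expression $\sum_{i,j}A_{ij}\ket{i}\otimes\ket{j}$ (up to index relabeling). Your added remark about the basis-dependence of the transpose is a useful clarification not made explicit in the paper's version, but the underlying computation is the same one-line index manipulation.
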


\begin{proof}
We can express the operator $A$ as $A = \sum_{u,w} a_{u,w} \ket{u}\bra{w}.$
The left hand side can now be written as  
$\sum_{u,x} a_{u,x}\ket{ux}.$

The right hand side is $
\sum_{u,w} a_{u,w}I \ket{x}\otimes \ket{w}\bra{u}\ket{x}= \sum_{x,w} a_{x,w}\ket{xw}, 
$ which equals the left hand side by renaming the variables.
\end{proof}
\subsection{Bilateral Rotations}\label{app:Bilateral_rots}
We review the bilateral rotations described in \cite{BDSW96} and how Bell states transform under such rotations.
Let $R_x = e^{-i\frac{\pi}{4}X},R_y = e^{-i\frac{\pi}{4}Y},R_z = e^{-i\frac{\pi}{4}Z},$ and generally $R_Q=e^{-i\frac{\pi}{4}Q} = \frac{1}{\sqrt{2}}(I-iQ)$ for $Q\in\{X,Y,Z\}$. A bilateral rotation is applied on both Alice and Bob's particles: $B_x = R_x \otimes R_x,B_y = R_y \otimes R_y,B_z = R_z \otimes R_z. $
Using Lemma \ref{lem:ricochet}, we can observe the effect of carrying out such a bilateral rotation on a Bell state affected by a Pauli error $P$ as follows:
\begin{align*}
    (R_Q\otimes R_Q)(\mathbbm{I}\otimes P) \ketbra{B_{00}}(\mathbbm{I}\otimes P)(R_Q^\dagger\otimes R_Q^\dagger) =
    (\mathbbm{I}\otimes e^{-i\frac{\pi}{4}Q} P e^{-i\frac{\pi}{4}Q^T}) \ketbra{B_{00}}(\mathbbm{I}\otimes e^{i\frac{\pi}{4}Q^T} P e^{i\frac{\pi}{4}Q}).
\end{align*}
The bilateral rotation then transforms the state according to Table \ref{tab:bilateral_rotation}.
\begin{table}[htb!]
    \centering
    
\begin{tabular}{|l||*{4}{c|}}\hline
\backslashbox{$P$}{$Q$}
&\makebox[3em]{$\mathbbm{I}$}&\makebox[3em]{$X$}&\makebox[3em]{$Y$}
&\makebox[3em]{$Z$}\\\hline\hline
$\mathbbm{I}$ &$\mathbbm{I}$ &$X$&$\mathbbm{I}$&$Z$\\\hline
$X$ &$X$&$\mathbbm{I}$&$Z$&$X$\\\hline
$Y$ &$Y$&$Y$&$Y$&$Y$\\\hline
$Z$ &$Z$&$Z$&$X$&$\mathbbm{I}$\\\hline
\end{tabular}
\caption{$e^{-i\frac{\pi}{4}Q} P e^{-i\frac{\pi}{4}Q^T}$}
    \label{tab:bilateral_rotation}
\end{table}

Consider the state \begin{equation}
    \label{eqn:BellPauliNoise}
    \rho=p_I \ketbra{B_{00}}+p_X X_B \ketbra{B_{00}}X_B+p_Y Y_B \ketbra{B_{00}}Y_B+p_Z Z_B\ketbra{B_{00}}Z_B 
\end{equation}that results from sending half of a Bell state $\ketbra{B_{00}}$ across a Pauli channel.
Applying bilateral rotations to such a state permutes the Pauli operators. For example, applying $B_x$ to $\rho$, we get $B_x \rho B_x^\dagger=p_I X_B\ketbra{B_{00}}X_B+p_X \ketbra{B_{00}}+p_Y Y_B \ketbra{B_{00}}Y_B+p_Z Z_B \ketbra{B_{00}}Z_B.$ Thus, the channel vector is permuted from $[p_I,p_X,p_Y,p_Z]$ to $[p_X,p_I,p_Y,p_Z].$

\subsection{Symplectic representation} \label{app:symplectic}
A Pauli string of length $n$ is a tensor product of Pauli matrices acting on $n$ qubits. Upto a global phase~($\pm i, \pm 1$), such a Pauli string can be represented by a binary string of length $2n$, where the first $n$ bits represent the $X$-component, such that the $i^{th}$ bit is 1 only if $P$ has an $X$ or $Y$ in the $i^{th}$ position and 0 otherwise. The last $n$ bits represent the $Z$-component, such that the $(i+n)^{th}$ bit is 1 only if $P$ has an $Z$ or $Y$ in the $i^{th}$ position.   In a similar manner an $n$-pair Bell state is denoted by $\ket{B_s}=\ket{B_{x_1,z_1}}\otimes \ket{B_{x_2,z_2}},\dots \otimes\ket{B_{x_n,z_n}}$, where $s=(x_1,z_1),\dots,(x_n,z_n)$ is a binary string of length $2n$.
This mapping between Paulis and binary strings is called the symplectic representation. Let $s$ be the symplectic representation of $P$ and $t$ be the symplectic representation of $Q$. 
%
The symplectic inner product between $s$ and $t$ determines whether $P$ and $Q$ commute or anti-commute.
Let $s_{x}, t_{x}$ be the $X$-part of the symplectic representation of $P$, $Q$ respectively, and $s_z, t_{z}$ be the $Z$-parts of $P$, $Q$ respectively. 
$P,Q$ commute if and only if the symplectic inner product: $(s_{x}\cdot t_{z}+s_{z}\cdot t_{z})\bmod 2 = 0 $, else they anti-commute~\cite{Rengaswamy_2018}.
This establishes a direct connection between linear parity checks applied to binary strings
and stabilizer measurements. 

\newcommand{\sB}{\textbf{s}}
\newcommand{\tB}{\textbf{t}}
\newcommand{\rB}{\textbf{r}}
\newcommand{\fB}{\textbf{f}}
\newcommand{\zrB}{\textbf{0}}

\subsection{PPT Bell-diagonal states}\label{app:bellppt}
Here we recap an expression for testing whether a Bell-diagonal state is PPT, as discussed in \cite{divincenzo2002quantum}. 

A Bell-diagonal matrix $M$ of dimension $2^{2n} \times 2^{2n}$ is diagonal in the Bell basis; $$M = \sum_{s \in \{0,1\}^{2n}} \alpha_s \ketbra{B_s}$$
Note that ${B_{00}}^{T_A} =\frac{1}{2}(B_{00}+B_{01}+B_{10}-B_{11}) $, so $$({B_{00}}^{T_A})^{\otimes n} =\frac{1}{2^n} \sum_{k\in\{0,1\}^{2n}} (-1)^{N_{11}(k)} \ketbra{B_k},$$
where $N_{11}(k)$ denotes the number of occurrences of $\ket{B_{11}}$ in $\ket{B_k}.$
Note that $\ket{B_s} = I_A \otimes P_{s,B} \ket{B_{00}}^{\otimes n},$ for some Pauli string $P_s$.

The partial transpose of $M$ is 
$ M^{T_A} = \frac{1}{2^n}\sum_{s,k\in\{0,1\}^{2n}}\alpha_s (-1)^{N_{11}(k)} \ketbra{B_{s\oplus k}}.$
$M$ is PPT iff 
$\forall m \in \{0,1\}^{2n},\bra{B_m} M^{T_A} \ket{B_m} \geq 0 $, or equivalently, \begin{equation} \forall m \in \{0,1\}^{2n}, \sum_{s\in\{0,1\}^{2n}} \alpha_s (-1)^{N_{11}(s\oplus m)}\geq 0.\label{eq:BellPPTCondition}\end{equation}

\subsection{Useful definitions}
Define a commutation function as follows: $$c(P_1,P_2)= \begin{cases} 1 & \textnormal{if } P_1,P_2 \textnormal{ commute},\\
-1 & \textnormal{otherwise.}
\end{cases}.$$

\section{Extended Literature Review}
\label{app:lit_review}

\subsection{Parity checks flavors: AEM \& BPM}\label{app:AEMBPM}
\paragraph{Appended E-bit Measurements (AEM)\cite{AsympAdaptive06}}
In the case of AEMs, Alice and Bob share an extra noiseless Bell pair. We may assume they share a state of the form $$
\sum_{s\in\{0,1\}^{2n}}p_s \ketbra{B_s}\otimes \ketbra{B_{00}},$$
where the first $2n$ qubits are the shared noisy Bell pairs at the output of the Pauli channel, and the final noiseless pair is the appended e-bit.
Let $f_x(s) = (x\cdot s) \bmod 2$, for some $x \in \{0,1\}^{2n}$, be any linear binary parity check. The value of this check can be determined in two steps.
\begin{enumerate}
    \item 
By carrying out local Cliffords (bilateral XOR (Fig.\ref{fig:bilateral-XOR})), local Paulis and bilateral rotations \ref{app:Bilateral_rots} \cite{BDSW96}), Alice and Bob can map their shared state to 
$$
\sum_{s\in\{0,1\}^{2n}}p_s \ketbra{B_s}\otimes \ketbra{B_{f_x(s),0}}.$$

Alice and Bob each measure their respective halves of the appended e-bit in the $Z$-basis, Alice can then share her measurement outcome with Bob. 

\item Using classical communication, Bob learns Alice's measurement outcome. By comparing Alice's outcome $a$ with his own $b$, Bob can determine the value of $f_x(s)$ as $a\oplus b.$ This can be seen as $f_x(s)=0$ implies that the appended e-bit is in the state $\frac{\ket{00}+\ket{11}}{\sqrt{2}}$ resulting in Alice's and Bob's measurement outcomes being equal. On the other hand, $f_x(s)=1$ implies that the appended e-bit is in the state $\frac{\ket{01}+\ket{10}}{\sqrt{2}}$ resulting in Alice's and Bob's measurement outcomes being unequal.
\end{enumerate}

\paragraph{Bilateral Pauli Measurements (BPM)\cite{AsympAdaptive06}}
In a BPM, we also proceed similarly in two steps; the first is applying local Cliffords and measurements, and the second is classical communication.
\begin{enumerate}
\item In the case of a BPM, Alice and Bob carry out the same projective measurement on their respective halves of the noisy Bell pairs. 
We study the effect of these measurements on each of the states $(\mathbbm{I}_A\otimes P_B) \ket{B_{00}}^{\otimes n}$ in the decomposition of the noisy state (\ref{eqn:noisyPauliState}).
Let $Q$ be a Pauli operator, then projectors  
%
$\Pi_{+} = (\mathbbm{I}+Q)/2$ and 
$\Pi_{-} = (\mathbbm{I}-Q)/2$
form elements of this projective measurement. Let $a\in \{+1,-1\}$ be the outcome of Alice's measurement, $b\in\{+1,-1\}$ be the outcome of Bob's measurement, and $\tilde{c} = c(Q,P) $. 
The post-measurement state is 
$$
\frac{\mathbbm{I}_A+a Q_A}{2} \otimes \frac{\mathbbm{I}_B+b Q_B}{2} (\mathbbm{I}_A\otimes P_B) \ket{B_{00}}^{\otimes n}.$$
Let $\hat{Y}(P)$ be a function for the number of $Y$'s in a Pauli string, for example, $\hat{Y}(I\otimes Y \otimes X \otimes Y\otimes Z) = 2$.
Using the Ricochet property of Lemma \ref{lem:ricochet}, and the fact that $Q^T = (-1)^{\hat{Y}(Q)} Q, $ we may rewrite the post-measurement state as: 
\begin{equation}\label{eqn:postmmt}
\mathbbm{I}_A\otimes\left(\frac{1+ab\tilde{c}(-1)^{\hat{Y}(Q)}}{2}\right)P_B \frac{\mathbbm{I}_B+b\tilde{c}Q_B}{2}\ket{B_{00}}^{\otimes n}.
\end{equation}

\item The probability that Alice and Bob's measurement outcomes satisfy $ab=\tilde{c}(-1)^{\hat{Y}(Q)}$ is 1, so they can always deduce the value of $\tilde{c}(-1)^{\hat{Y}(Q)}$ by classically communicating their measurement outcomes.
\end{enumerate}
By tracing out the $A$ systems of the state in (\ref{eqn:postmmt}), we get a state $\propto P\left(\mathbbm{I}_B+b\tilde{c}Q_B\right)$ with $2^{n-1}$ non-zero eigenvalues. Thus, the Schmidt rank of the state (\ref{eqn:postmmt}) is $2^{n-1}$, implying that the measurement costs 1 of the shared noisy Bell-pairs. 

Note that in contrast with the AEM case, BPMs have a `backaction' that alters the state shared between Alice and Bob. Such a change of the shared state must be accounted for in entanglement distillation protocols. 

\subsection{Interpolation of Recurrence and Hashing \cite{VV05}}
The protocol of \cite{VV05} makes use of two main ideas; the first is partial breeding and the second is bilateral Pauli measurements. 
Partial breeding can be thought of as an interpolation of recurrence and hashing in the sense that in recurrence a finite-size parity check that consumes a Bell pair out of each block of two Bell pairs is carried out, determining the value of the check for each of the blocks, while hashing exactly determines the entire Pauli string at an entropic cost, rather than a fixed cost per check. In partial breeding, the same information as a finite-size check (such as recurrence) is determined, but since the check is done in an asymptotic manner like hashing, the cost depends on the entropy of the outcome distribution. Thus, the partial breeding step of the protocol of \cite{VV05} determines asymptotically a linear parity check $s\cdot r$, where $r$ is $n/2$ copies of $[1010]$, determining which blocks of two Bell pairs are affected by a single $X$ error. 
On the other hand, the obtained state at the following stages of the protocol may have a negative breeding/hashing rate despite being distillable. To extract the leftover entanglement, \cite{VV05} suggest carrying out a bilateral Pauli measurement, similar to a finite size recurrence step. The back-action and degeneracy effects due to this finite size check manage to squeeze out a state with less entropy and thus having a positive hashing rate.

Here, we explain in more detail the steps of the protocol of \cite{VV05}. \label{app:VV}

\begin{itemize}

\item The protocol starts by applying `partial breeding', where, similarly to breeding, many asymptotic (infinite-size) AEM random parity checks are applied. However, in contrast with breeding, instead of applying  checks to completely identify  $s$, only enough checks are applied to determine the values of $s_{4i} \oplus s_{4i+2} $ for all $i$.
In other words, instead of completely determining the Pauli error $P_B$, partial breeding is used to find out, for every pair of noisy qubits $2k,2k+1$, if the error $P_{2k,B}\otimes P_{2k+1,B} $  commutes or anticommutes with $Z_{2k}\otimes Z_{2k+1}.$ 

Let $\rho_{2}$ be the state of the noisy Bell pairs $2k$ and $2k+1$,   
\begin{equation}
    \rho_{2} = \sum_{P_{2k,B},P_{2k+1,B} \in \mathcal{P}^{\otimes 2}}\textnormal{Pr}[P_{2k,B}\otimes P_{2k+1,B}] P_{2k,B}\otimes P_{2k+1,B} \ketbra{B_{00}}^{\otimes 2} P_{2k,B}\otimes P_{2k+1,B},
    \label{eq:belldiagonal2qubit}
\end{equation}
where $\textnormal{Pr}[P_{2k,B}\otimes P_{2k+1,B}]$ is the probability that a certain Pauli error $P_{2k,B},P_{2k+1,B} \in \mathcal{P}^{\otimes 2}$ occurs 
on Bob's parts of the $2k^{th}, (2k+1)^{th}$ noisy Bell pairs, according to the channel model.
Once the partial breeding checks are complete, Alice and Bob can determine 
if 
$P_{2k,B}\otimes P_{2k+1,B}  \in \mathcal{P}_{\text{even}}=\{IZ,ZI,XX,YY,ZZ,II,XY,YX\}$. Let $p_{\text{even}}$ denote the probability that $P_{2k,B}\otimes P_{2k+1,B}  \in \mathcal{P}_{\text{even}}$.
The number of noiseless Bell pairs to be consumed in the partial breeding steps, which is the number of AEM checks required to determine for all $k$ if $P_{2k,B}\otimes P_{2k+1,B}\in \mathcal{P}_{\text{even}}$ is $h_b(p_{\text{even}})$ clean Bell pairs per 2 noisy pairs. 

\item For every pair of noisy qubits $2k,2k+1$, if indeed $P_{2k,B}\otimes P_{2k+1,B}\in \mathcal{P}_{\text{even}} $, then the obtained state $$\rho_{\textnormal{even}} = \frac{1}{p_{\text{even}}}\sum_{P_1\otimes P_2 \in \mathcal{P}_{\text{even}}} \text{Pr}\left[P_{1,B}\otimes P_{2,B}\right] P_{1,B}\otimes P_{2,B} \ketbra{B_{00}}^{\otimes 2}P_{1,B}\otimes P_{2,B}$$ is then fed into the hashing protocol, which gives the yield $2-S(\rho_{\textnormal{even}})$, per 2 noisy pairs.

\item Otherwise, for pairs of Bell pairs where the 2-qubit error $P_{2k,B}\otimes P_{2k+1,B} $ anticommutes with $Z_{2k}\otimes Z_{2k+1},$   $P_{2k,B}\otimes P_{2k+1,B} \in \mathcal{P}_{\text{odd}}=\{XZ,ZX,XI,YI,IX,IY,YZ,ZY\}$. This occurs with probability $p_{\text{odd}} =1-p_{\text{even}}.$ The state in this case is 
$$\rho_{\textnormal{odd}} = \frac{1}{p_{\text{odd}}}\sum_{P_1\otimes P_2 \in \mathcal{P}_{\text{odd}}} \text{Pr}\left[P_{1,B}\otimes P_{2,B}\right] P_{1,B}\otimes P_{2,B} \ketbra{B_{00}}^{\otimes 2}P_{1,B}\otimes P_{2,B}.$$
In this case, a BPM (finite-size) check is applied to the state $\rho_{\textnormal{odd}}$ to determine if $P_{1,B}$ commutes with $Z_1$. In this case, determining this commutation relation is not done via many random infinite size measurements with appended e-bits asymptotically, but through a finite size check that applies a projective measurement onto the $+1$ and $-1$ eigenstates of $Z$ on both Alice and Bob's particles of the first noisy pair. Such a finite size check collapses one of the noisy Bell pairs of the state $\rho_{\text{odd}}.$ Moreover, since the errors $P_1 \otimes P_2$ and $P_1Z \otimes P_2$ differ by a stabilizer, they have the same effect on the state, and are mapped to the same state, leading to entropy reduction, as pointed out by \cite{AsympAdaptive06}.   
 After this measurement, we end up with a rank-2 state.

Given that the partial breeding step indicated that the 2-qubit error anticommutes with $ZZ,$ then \begin{itemize}
    \item[\ding{228}] $P_{1,B}$ commutes with $Z_1$ with probability $\frac{p_0}{p_{\text{odd}}}$, where $p_0 = p_{ZX}+p_{IX}+p_{ZY}+p_{IY}$. In this case, the state becomes $$\rho_{\text{odd},0} = \frac{p_{ZX}+p_{IX}}{p_{0}} 
 X_B \ketbra{B_{00}}X_B+\frac{p_{ZY}+p_{IY}}{p_0}Y_B \ketbra{B_{00}}Y_B.$$
 This state can then be passed on to the hashing protocol achieving a yield of $1-S(\rho_{\text{odd},0}).$

 \item[\ding{228}] $P_{1,B}$ anticommutes with $Z_1$ with probability $\frac{p_1}{p_{\text{odd}}}$, where $p_1 = p_{XZ}+p_{XI}+p_{YZ}+p_{YI}$. In this case, the state becomes $$\rho_{\text{odd},1} = \frac{p_{XZ}+p_{YZ}}{p_{1}} 
 Z_B \ketbra{B_{00}}Z_B+\frac{p_{XI}+p_{YI}}{p_1}\ketbra{B_{00}}.$$
  This state can then be passed on to the hashing protocol achieving a yield of $1-S(\rho_{\text{odd},1}).$
\end{itemize}
\end{itemize}

In summary, the protocol starts by applying partial breeding to $n/2$ copies of a Bell diagonal state over 2 Bell pairs $\sigma$, which can be decomposed as 
$
    \sigma = p_{\text{even}}\rho_{\text{even}} \oplus 
    p_{\text{odd}} \rho_{\text{odd}},
$
where the direct sum symbol $\oplus$ expresses that $\rho_{\text{even}}$ and 
$\rho_{\text{odd}}$ are orthogonal. Then, the state $$\sigma^{\otimes (n/2)}=
\sum_{f \in \{\text{even},\text{odd}\}^{n/2}}\textnormal{Pr}[f] \bigotimes_{f_i \in f} \rho_{f_i}.$$ After partial breeding we learn an $n/2$-bit string indicating which pairs have an error $\in \mathcal{P}_{\text{even}}$, and which have an error $\in \mathcal{P}_{\text{odd}}$. Once we learn $f$, the pairs in the even case are passed on to the hashing protocol obtaining $1-S(\rho_{\text{even}})/2$ per input Bell pair, while those in the odd case are passed on to another round of a BPM check, then to hashing. 
Since the even case occurs with probability $p_{\text{even}}$, and since the outcome of final BPM check is 0 with probability $p_0$ yielding a state $\rho_{\text{odd},0}$ and is $1$ with probability $p_1 = p_{\text{odd}}-p_0 = 1-p_{\text{even}}-p_0$ yielding a state $\rho_{\text{odd},1}$, the overall yield of the protocol is given by 

\begin{align}\text{Yield}_{VV}(\rho_2)=
p_{\text{even}}\left[1- \frac{S(\rho_{\text{even}})}{2}\right]-\frac{h_b(p_{\text{even}})}{2}+\frac{p_0}{2} \left[1-S\left(\rho_{\text{odd},0}\right)\right]+\frac{p_1}{2}  \left[1-S\left(\rho_{\text{odd},1}\right)\right].
\end{align}

\subsection{Adaptive Entanglement Purification Protocol $\texttt{AEPP}^*(4)$}

In \cite{leung2007adaptive}, the authors propose an adaptive protocol that relies on performing a BPM (finite-size) check on $m$ noisy Bell pairs to determine if the Pauli error affecting the state commutes or anticommutes with $Z^{\otimes m}$. 
In case the error commutes with $Z^{\otimes m}$, the state is passed on to the hashing protocol. Otherwise, further shorter finite-size checks are applied to localize the error. 

A modified version of this protocol is also suggested in the same work \cite{leung2007adaptive}, where instead of switching directly to hashing when the error commutes with $Z^{\otimes m}$, a check in the complementary basis is performed first. The authors use these ideas to devise the $\texttt{AEPP}^*(4)$, which we review in detail. 

Consider a noisy Bell-diagonal state, where 4 noisy Bell-pairs are shared between Alice and Bob, 
$ \sum_{s \in\{0,1\}^8} p_{s} \ketbra{B_{s} }, $ with $s$ being a binary string of the form $s=(x_1,z_1),(x_2,z_2),(x_3,z_3),(x_4,z_4)$.
\begin{itemize}
\item As a first step, a BPM finite-size check is carried out on 4 noisy Bell pairs. This check determines if the error affecting the state commutes with $Z^{\otimes 4}$. The check is carried out by applying the circuit in Figure \ref{fig:ZZZZ}, which consists of several applications of bilateral XORs. 
The effect of the circuit is to map $s$ to $s^\prime=(x_1,z_1\oplus z_4),(x_2,z_2\oplus z_4),(x_3,z_3\oplus z_4),(x_1\oplus x_2\oplus x_3\oplus x_4,z_4).$
\begin{figure}[htb]
\centering
\begin{quantikz}
\makeebit[angle=-60]{} &              &\ctrl{6}&&&&& \\  &\gate{x_1,z_1}  &&\ctrl{6} &&&& \\
\makeebit[angle=-60]{} && && \ctrl{4}&&&\\
&\gate{x_2,z_2}&&&&\ctrl{4}&&\\
\makeebit[angle=-60]{} &&&&&&\ctrl{2} & \\
&\gate{x_3,z_3}&&&&&&\ctrl{2} \\
\makeebit[angle=-60]{} &&\targ{} &&\targ{}&&\targ{}&&\meter{}  \\
&\gate{x_4,z_4}&&\targ{} &&\targ{}&&\targ{}&\meter{} 
\end{quantikz}
\caption{$ZZZZ$ check.}
\label{fig:ZZZZ}
\end{figure}
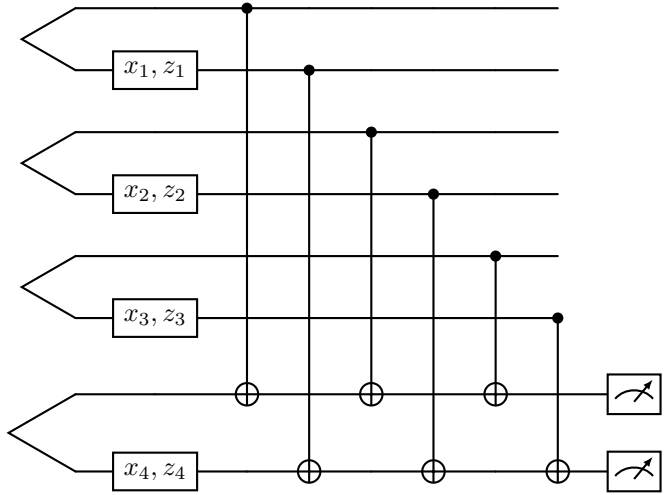

\item If the error commutes with $ZZZZ$, then the syndrome bit $b_1=x_1\oplus x_2\oplus x_3\oplus x_4$ corresponding to this check takes the value $b_1=0.$ If $b_1=0$, then the next step is to check for errors in the complementary basis, so we measure $XXX$ on the state resulting from the previous check by applying the circuit in Figure \ref{fig:XXXX} to the 3 surviving pairs. This circuit maps the Bell string $s^\prime$ to 
$s^{\prime\prime} =(x_1\oplus x_3,z_1\oplus z_4),(x_2\oplus x_3,z_2\oplus z_4),(z_1\oplus z_2\oplus z_3\oplus z_4,x_3),(x_1\oplus x_2\oplus x_3\oplus x_4,z_4). $
The syndrome bit corresponding to this check is $b_2 = z_1\oplus z_2\oplus z_3\oplus z_4$.
This is equivalent to having measured the $XXXX$ check on the initial state (prior to the $ZZZZ$ check). Note that $ZZZZ,XXXX$ are the stabilizer checks of the well-known $\llbracket 4,2,2\rrbracket$ code. 

After measuring the $ZZZZ$, $XXXX$ checks, and obtaining the syndrome bits $b_1,b_2$, the post-measurement state, under the assumption that all 4 noisy Bell pairs are affected by i.i.d. Pauli noise with the probability of a Pauli error $X^x Z^z$, $x,z \in\{0,1\}$, is proportional to 
\begin{equation}
\sum_{r_x,r_z,t_x,t_z \in \{0,1\}^4} q^{b_1,b_2}(r_x,r_z,t_x,t_z)\ketbra{B_{r_x,r_z}}\otimes\ketbra{B_{t_x,t_z}}, \label{eq:422Outputstate}
\end{equation}
where
\begin{equation}
q^{b_1,b_2}(r_x,r_z,t_x,t_z) = \sum_{x_1,z_1}p(x_1,z_1)p(x_1\oplus r_x\oplus t_x,z_1\oplus r_z\oplus t_z)p(x_1\oplus r_x,b_2\oplus z_1\oplus t_z)p(b_1\oplus x_1\oplus t_x,z_1\oplus b),\label{eq:q422}\end{equation}
\item If the error anticommutes with $ZZZZ$, with $b_1=1$, then instead of applying the $XXXX$ check, another $Z$-type check is carried out to localize the error that caused the first check to fire. To do so,we measure the check $ZZII.$
Let $b_2^\prime$ be the syndrome bit obtained upon measuring $ZZII$, the leftover post-measurement state (on the first and third pairs) is proportional to the following product state: 
\begin{align*}\sum_{x_1,\Tilde{z}_1,z_1 \in\{0,1\}}p(x_1,z_1)p(x_2=b_2^\prime\oplus x_1,z_2=\Tilde{z}_1\oplus z_1)\ketbra{B_{x_1,\Tilde{z}_1}}\\\otimes\sum_{x_3,\Tilde{z}_3,z_3 \in\{0,1\}}p(x_3,z_3)p(x_4=1\oplus b_2^\prime\oplus x_3,z_4=\Tilde{z}_3\oplus z_3)\ketbra{B_{x_3,\Tilde{z}_3}}\end{align*}
When $b_2^\prime = 0 $, the first pair is accepted pair, while the third pair becomes undistillable, and is therefore discarded. When $b_2^\prime = 1 $, the first pair becomes undistillable and the third pair is kept. The output state can then be passed on to other protocols. In either case, the accepted pair has a probability distribution representing an accepted pair in one step of a $Z$ recurrence protocol, i.e., \begin{equation}\biggl[p_I^\prime =\frac{p_I^2+p_Z^2}{p_{\text{accept}}}, p_X^\prime = \frac{p_X^2+p_Y^2}{p_{\text{accept}}}, p_Y^\prime =\frac{2p_Yp_X}{p_{\text{accept}}}, p_Z^\prime =\frac{2p_Ip_Z}{p_{\text{accept}}}\biggr],\label{eq:Daccept}\end{equation} where, $p_{\text{accept}} = (p_I+p_Z)^2+(p_X+p_Y)^2$.


\end{itemize}

\begin{figure}[htb]
\centering
\begin{quantikz}
\makeebit[angle=-60]{} &&\targ{}&&&& \\  &\gate{x_1,z_1}  &&\targ{} &&& \\
\makeebit[angle=-60]{} && && \targ{}&&\\
&\gate{x_2,z_2}&&&&\targ{}&\\
\makeebit[angle=-60]{} &&\ctrl{-4} &&\ctrl{-2}&&\gate{H}&\meter{}  \\
&\gate{x_3,z_3}&&\ctrl{-4} &&\ctrl{-2}&\gate{H}&\meter{} 
\end{quantikz}
\caption{$XXXX$ check following $ZZZZ$ check.}
\label{fig:XXXX}
\end{figure}
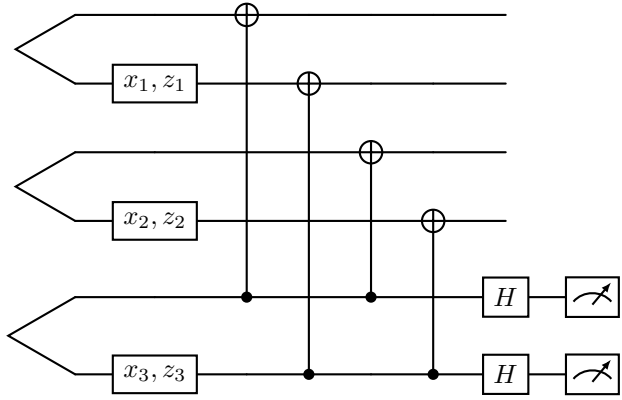

\section{Useful lemmas}
\label{app:proofs_lemmas}

The following lemmas will be useful for analyzing different distillation protocols.
\begin{lemma}(Entropy reduction by grouping)~\cite{AsympAdaptive06}\label{lem:entropy_reduction}
    $$\forall x,y\geq 0, -x\log_2 x-y \log_2 y +(x+y)\log_2(x+y)=(x+y)h_b\left(\frac{x}{x+y}\right)\geq 0.$$
\end{lemma}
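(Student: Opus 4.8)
The plan is to verify the claimed identity by a direct algebraic rewriting of the definition of the binary entropy $h_b$, and then to obtain the inequality as an immediate consequence of the monotonicity of $\log_2$ (equivalently, of the nonnegativity of $h_b$ on $[0,1]$). Throughout I would use the standard convention $0\log_2 0 = 0$, which I would state explicitly once at the outset.

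First I would dispose of the degenerate cases. If $x+y=0$ then $x=y=0$ and both sides equal $0$. If exactly one of $x,y$ vanishes, say $y=0<x$, then the left-hand side is $-x\log_2 x + x\log_2 x = 0$ while the right-hand side is $x\,h_b(1)=0$, so the identity and the inequality both hold trivially.

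For the main case $x,y>0$, set $t\coloneqq x/(x+y)\in(0,1)$, so that $1-t = y/(x+y)$, and expand
\begin{align*}
(x+y)\,h_b(t) &= -(x+y)\Bigl(t\log_2 t + (1-t)\log_2(1-t)\Bigr)\\
&= -x\log_2\frac{x}{x+y} - y\log_2\frac{y}{x+y}\\
&= -x\log_2 x - y\log_2 y + (x+y)\log_2(x+y),
\end{align*}
where the last step uses $\log_2(a/b)=\log_2 a-\log_2 b$ and $x\log_2(x+y)+y\log_2(x+y)=(x+y)\log_2(x+y)$. This is exactly the asserted identity. For the inequality, since $0<x\le x+y$ and $0<y\le x+y$ we have $\log_2 x\le\log_2(x+y)$ and $\log_2 y\le\log_2(x+y)$, hence $x\log_2 x+y\log_2 y\le(x+y)\log_2(x+y)$, i.e. the expression is $\ge 0$; alternatively one can note that $h_b(t)\ge0$ on $[0,1]$ because each of $-t\log_2 t$ and $-(1-t)\log_2(1-t)$ is nonnegative there, and $x+y\ge 0$.

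There is essentially no real obstacle in this lemma; the only point requiring any care is the bookkeeping for the boundary cases $x=0$ or $y=0$ and making the $0\log_2 0 = 0$ convention explicit, after which the computation is a one-line rearrangement.
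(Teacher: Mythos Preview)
Your proof is correct and complete. The paper itself does not supply a proof of this lemma; it merely states the identity with a citation to \cite{AsympAdaptive06}, so there is nothing to compare against beyond noting that your direct verification (expanding $(x+y)h_b(x/(x+y))$ and invoking the nonnegativity of $h_b$, with the $0\log_2 0=0$ convention handling the boundary cases) is exactly the standard argument one would expect.
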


\begin{lemma}(Inequality of ratios of arithmetic mean to geometric mean of pairs of disparate probabilities vs pairs of similar probabilities)
\label{lem:ramgm}
Let $a,b,c,d$ be the elements of a probability distribution satisfying $a=\frac{1}{2}+\epsilon$, $\epsilon > 0$, and $a\geq c \geq d \geq b.$
Then, 
$$cd(a^2+b^2)-ab(c^2+d^2)\geq 0.$$
\end{lemma}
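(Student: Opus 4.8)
The plan is to reduce the inequality to a product of two manifestly nonnegative factors. First I would expand the left-hand side and look for a factorization: writing
\[
cd(a^2+b^2)-ab(c^2+d^2) = a^2cd - abc^2 + b^2cd - abd^2
\]
and grouping the first two terms and the last two terms, one gets $ac(ad-bc) - bd(ad-bc)$, so that
\[
cd(a^2+b^2)-ab(c^2+d^2) = (ad-bc)(ac-bd).
\]
Thus it suffices to show that each of the two factors is nonnegative.

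For the first factor, I would use the hypothesis $a \geq c$ together with $d \geq 0$ to get $ad \geq cd$, and then $d \geq b$ together with $c \geq 0$ to get $cd \geq cb = bc$; chaining these gives $ad \geq bc$, i.e.\ $ad - bc \geq 0$. For the second factor, note that the chain $a \geq c \geq d \geq b$ in particular gives $a \geq b$, so $ac \geq bc$ (using $c \geq 0$), while $c \geq d$ gives $bc \geq bd$ (using $b \geq 0$); hence $ac \geq bd$, i.e.\ $ac - bd \geq 0$. All the nonnegativity requirements on $a,b,c,d$ are met because they are entries of a probability distribution. Multiplying the two nonnegative factors yields the claimed inequality.

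I do not expect a genuine obstacle here: the only slightly non-routine step is spotting the factorization $(ad-bc)(ac-bd)$, after which everything follows from the ordering $a\geq c\geq d\geq b$ and nonnegativity alone. In fact this argument shows the conclusion holds for \emph{any} nonnegative reals with $a \geq c \geq d \geq b$; the normalization $a+b+c+d=1$ and the condition $a = \tfrac12+\epsilon$ are not needed for the inequality itself, and merely record the setting in which the lemma will be applied.
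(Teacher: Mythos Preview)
Your proof is correct and follows essentially the same approach as the paper: factor the left-hand side as $(ad-bc)(ac-bd)$ and use the ordering $a\geq c\geq d\geq b$ together with nonnegativity to conclude both factors are nonnegative. Your observation that the hypotheses $a+b+c+d=1$ and $a=\tfrac12+\epsilon$ are unnecessary for the inequality itself is accurate and a nice addition.
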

\begin{proof}
We may factorize the l.h.s. of the inequality as $(ad-bc)(ac-bd)$. Due to the ordering of the variables $a\geq c \geq d \geq b,$ we have $ad\geq bc$ and $ac\geq bd$. Thus, both factors of the l.h.s. satisfy $(ad-bc)\geq 0$ and $(ac-bd)\geq 0$ proving the inequality.

\end{proof}

\begin{lemma}(Entropy of a direct sum)\label{lem:entropy_of_direct_sum}
Let $ \rho = \sum_{i \in \mathcal{S}_0} \lambda_i \ketbra{\psi_i}+\sum_{i \in \mathcal{S}_1} \lambda_i \ketbra{\psi_i},$
where $\mathcal{S}_0,\mathcal{S}_1$ are two disjoint sets of indices, $\sum_{i \in \mathcal{S}_0\cup\mathcal{S}_1} \lambda_i=1$ and $\bra{\psi_k}{\psi_{k^\prime}}\rangle=\delta_{kk^\prime} \forall k,k^\prime \in \mathcal{S}_0\cup\mathcal{S}_1.$
Moreover, let 
$$ p_x = \sum_{i \in \mathcal{S}_x} \lambda_i, 
\quad 
\Tilde{\rho}_x =\sum_{i \in \mathcal{S}_x} \frac{\lambda_i}{p_x}\ketbra{\psi_i}, x\in\{0,1\}, $$
i.e., $\rho = p_0 \Tilde{\rho}_0\oplus p_1\Tilde{\rho}_1.$
Then, $$S(\rho) = p_0 S(\Tilde{\rho}_0)+p_1 S(\Tilde{\rho}_1)+h_b(p_0).$$
\end{lemma}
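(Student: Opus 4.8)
The statement is essentially the grouping/chain rule for von Neumann entropy applied to an orthogonal decomposition. Since the $\{\ket{\psi_i}\}$ are orthonormal, $\rho$ is already diagonalized: its eigenvalues are exactly the $\{\lambda_i\}_{i\in\mathcal S_0\cup\mathcal S_1}$. Similarly $\tilde\rho_x$ has eigenvalues $\{\lambda_i/p_x\}_{i\in\mathcal S_x}$. So the whole statement reduces to a purely classical identity about the Shannon entropy of the probability vector $[\lambda_i]$ and its conditional pieces.

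Let me plan the proof. First I would note that for a state that is diagonal in a fixed orthonormal basis, $S(\rho)=h([\lambda_i])$, the Shannon entropy of the eigenvalue distribution, and likewise $S(\tilde\rho_x)=h([\lambda_i/p_x]_{i\in\mathcal S_x})$. Then I would just expand the right-hand side directly:
\begin{align*}
S(\rho)&=-\sum_{i\in\mathcal S_0\cup\mathcal S_1}\lambda_i\log_2\lambda_i
=-\sum_{x\in\{0,1\}}\sum_{i\in\mathcal S_x}\lambda_i\log_2\lambda_i\\
&=-\sum_{x\in\{0,1\}}\sum_{i\in\mathcal S_x}p_x\frac{\lambda_i}{p_x}\bigl(\log_2 p_x+\log_2\tfrac{\lambda_i}{p_x}\bigr)\\
&=-\sum_{x\in\{0,1\}}p_x\log_2 p_x\Bigl(\sum_{i\in\mathcal S_x}\tfrac{\lambda_i}{p_x}\Bigr)
-\sum_{x\in\{0,1\}}p_x\sum_{i\in\mathcal S_x}\tfrac{\lambda_i}{p_x}\log_2\tfrac{\lambda_i}{p_x}\\
&=-\sum_{x\in\{0,1\}}p_x\log_2 p_x+\sum_{x\in\{0,1\}}p_x S(\tilde\rho_x)
=h_b(p_0)+p_0S(\tilde\rho_0)+p_1S(\tilde\rho_1),
\end{align*}
using $\sum_{i\in\mathcal S_x}\lambda_i/p_x=1$ and $p_1=1-p_0$ so that $-p_0\log_2 p_0-p_1\log_2 p_1=h_b(p_0)$.

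The only genuinely substantive point — and I would state it explicitly before the computation — is the claim that $S$ of a matrix diagonal in an orthonormal basis equals the Shannon entropy of its diagonal entries; here it follows immediately because $\rho=\sum_i\lambda_i\ketbra{\psi_i}$ with $\braket{\psi_k}{\psi_{k'}}=\delta_{kk'}$ is a spectral decomposition (the $\lambda_i\geq 0$ and sum to one). Everything else is the routine logarithm bookkeeping above, so there is no real obstacle: the lemma is a direct-sum/block-diagonal restatement of the standard entropy chain rule $H(X,Y)=H(X)+H(Y\mid X)$ with $X$ the block label. One could alternatively cite this as a known fact about entropy of block-diagonal density matrices, but the three-line expansion is cleaner to keep the paper self-contained.
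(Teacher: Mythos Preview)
Your proof is correct. It takes a somewhat different route from the paper's, however. The paper extends $\{\ket{\psi_i}\}$ to a full orthonormal basis, builds an isometry $V=\sum_i\ket{f(i)}\ket{\psi_i}\bra{\psi_i}$ that tags each eigenvector with its block label $f(i)\in\{0,1,2\}$, and observes that $V\rho V^\dagger=\sum_x p_x\ketbra{x}\otimes\tilde\rho_x$ is a classical--quantum state; the result then follows from isometric invariance of $S$ together with the cq-state identity $S(B,X)=S(B\mid X)+S(X)$. Your argument instead notes that $\rho$ is already spectrally decomposed, so $S(\rho)$ is literally the Shannon entropy of the eigenvalue vector, and then does the three-line logarithm expansion that proves the classical chain rule directly. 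Your version is more elementary and fully self-contained (no appeal to the cq-entropy identity or isometric invariance); the paper's version is more structural and makes the ``block label as a classical register'' picture explicit, which some readers may find conceptually cleaner. Either is perfectly adequate here.
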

\begin{proof}
Extend $\{\ket{\psi_i}\}$ to a full orthonormal basis, and consider the isometry 
$$V =\sum_{i}\ket{f(i)}\ket{\psi_i}\bra{\psi_i},$$ with $f(i) = 0$ if $i\in \mathcal{S}_0$, $f(i) = 1$ if $i\in \mathcal{S}_1$ and $f(i) = 2$ if $i$ not in $\mathcal{S}_0$ or $\mathcal{S}_1$. Note that $V$ satisfies $V^\dagger V ={I}.$ 
Since entropy does not change under the application of an isometry $S(\rho)=S(V\rho V^\dagger),$ where $V\rho V^\dagger$ is a classical-quantum state given by $$ V\rho V^\dagger = \sum_{x \in \{0,1\}}p_x\ketbra{x}\otimes \tilde{\rho}_{x}$$

Due to this identity for joint entropy of a classical quantum state $S(B,X) = S(B|X)+S(X)$, we may write 
\begin{align*} S(\rho)=S(V\rho V^\dagger)=\sum_x p_x S(\Tilde{\rho}_x)+h_b(p_0).
\end{align*}
\end{proof}

\section{Proofs of theorems providing performance guarantees for Greedy recurrence}
\label{app:proofs_thms_greedy} 
\guaranteedimprovement*
\begin{proof}
    Since we focus on the range of input states with $p_I> 1/2$, we may write $p_I = 1/2+\epsilon$ for $0< \epsilon < 1/2. $

    By construction of the greedy recurrence scheme, at each step we apply $Q$-recurrence, with $Q$ being the Pauli channel component with the smallest probability $p_Q$ i.e., $p_Q \leq p_{Q^\prime}, \forall Q^\prime \in \{I,X,Y,Z\} $. Then, $p_Q=(1/2-\delta)/3$ with $\epsilon \leq \delta \leq 1/2$.  
    Using eq.~\eqref{eqn:accepted_distribution}, we find that the improvement is $$ p_I^\prime -p_I = \epsilon \frac{\alpha-\gamma}{\beta+\gamma},$$ with
    \begin{align*}
    \alpha =& 81+648\delta\\
    \beta =& 405-162\delta\\
      \gamma =& 162 \delta^{2} - 972 \delta \epsilon + 1458 \epsilon^{2} + 486 \epsilon, 
    \end{align*}
    For $0<\epsilon<1/2,\epsilon\leq \delta\leq 1/2$, $\beta>0,\gamma>0$. 
    Assuming a strict inequality: $\epsilon<\delta$, 
    we can show 
    \begin{align*}
     \gamma{<}&486 \epsilon +486 \delta \epsilon +162 \delta^2\\
        <&648 \delta +162\delta(4\delta-1)\\
        {{\leq}}&648\delta+81 = \alpha.
    \end{align*}
    The first two strict inequalities are due to the assumption that $\epsilon<\delta$, while the last inequality is due to the constraint that $\delta\leq1/2$.
    The previous set of inequalities imply $\gamma < \alpha$. 
    In the case when $\epsilon=\delta$, then the strict inequality $\delta<1/2$ holds in the third step instead of the first and second, due to $p_I<1$. This ensures that in both cases, $\gamma < \alpha$ and $p_I^\prime>p_I.$
    
\end{proof}

\optimalityofgreedyrecurrence*
\begin{proof}
Using eq.~\eqref{eqn:accepted_distribution}, we have
\begin{align*}
    p_{I,Q^\ast}^\prime -p_{I,\Tilde{Q}}^\prime = \frac{p_I^2+(\alpha-p_I)^2}{\alpha^2+(1-\alpha)^2}-\frac{p_I^2+(\beta-p_I)^2}{\beta^2+(1-\beta)^2},
\end{align*}
where $\alpha = p_I+p_Q^\ast$ and 
$\beta = p_I+p_{\Tilde{Q}}$.
To show that $p_{I,Q^\ast}^\prime \geq p_{I,\Tilde{Q}}^\prime$, it suffices to show that the numerator of the above expression is non-negative, as the denominator is a product of sums of squares. 

Thus, we write the numerator as $$f=[p_I^2+(p_I-\alpha)^2][\beta^2+(1-\beta)^2]-[p_I^2+(p_I-\beta)^2][\alpha^2+(1-\alpha)^2].$$
Note that $1/2\leq \alpha \leq \beta \leq p_I\leq 1.$
Thus, $(p_I-\alpha)^2\geq (p_I-\beta)^2$, implying
\begin{equation}\label{ineeq:f1}[p_I^2+(p_I-\alpha)^2]\geq[p_I^2+(p_I-\beta)^2]\geq0.\end{equation}
Moreover, the function $x^2+(1-x)^2$ is increasing over $x\in[1/2,\infty)$. 
This, combined with $1/2\leq \alpha \leq \beta$ implies 
\begin{equation}\label{ineeq:f2}[\beta^2+(1-\beta)^2]\geq [\alpha^2+(1-\alpha)^2]\geq0.\end{equation}
Combining (\ref{ineeq:f1}) and (\ref{ineeq:f2}), we obtain $f\geq 0$.
\end{proof}

\MacVsGreedy*

\begin{proof}
    We prove this by induction. 
    To fix notation, we denote by $t \in \{0,1,\dots\}$ the $t^{th}$ step of recurrence, with input probabilities $\{p_{Q}^{(t)}\},$ and output probabilities $\{p_{Q}^{(t+1)}\},$ computed using eq.~\eqref{eqn:accepted_distribution}, for $Q \in\{I,X,Y,Z\}$.
    The proof uses induction to show that if at the initial step $p_Z$ is equal to the minimum probability, then at following odd numbered steps $\min_{Q\in\{I,X,Y,Z\}} p_Q^{(2t+1)} = p_Y^{(2t+1)}$ and at following even-numbered steps $\min_{Q\in\{I,X,Y,Z\}} p_Q^{(2t)} = p_Z^{(2t+1)}$ implying that $Z,Y,Z,Y,\cdots$ is a valid sequence for Greedy recurrence.\\
    \emph{Base case:}
    At $t=0$, since $p_Z^{(0)}=p_X^{(0)}=p_Y^{(0)}=p/{3}$, i.e., $p_Z^{(0)}\leq p_Y^{(0)}\leq p_X^{(0)}\leq p_I^{(0)}$, we perform a step of $Z$-recurrence.
    At the output we have $$p_I^{(1)} = \frac{(1-p)^2+p^2/9}{p_{\text{pass}_Z}}, \quad p_Z^{(1)} = \frac{2(1-p)p}{3p_{\text{pass}_Z}}, \quad p_X^{(1)} = \frac{2p^2}{9p_{\text{pass}_Z}},\quad p_Y^{(1)} = \frac{2p^2}{9p_{\text{pass}_Z}}.$$
    
    At $t=1$, since $p<1/2$, $p_Y^{(1)}\leq \min \{p_Z^{(1)},p_X^{(1)}\} $, we apply $Y$-recurrence.
    At the output we get $p_X^{(2)} \geq p_Z^{(2)} $ and \begin{align*}
    p_Z^{(2)} &= \frac{2p_X^{(1)}p_Z^{(1)}}{p_{\text{pass}_Y}}= \frac{4[2p^2/9](1-p)p/3}{p_{\text{pass}_Y}p_{\text{pass}_Z}^2} \\ &=\frac{p_Y^{(1)}4(1-p)p/3}{p_{\text{pass}_Y}p_{\text{pass}_Z}}
    \leq\frac{2p_Y^{(1)}p_I^{(1)}}{p_{\text{pass}_Y}} 
    = p_Y^{(2)}.
    \end{align*}
    \\
    \emph{Inductive step:}\\
     At (even-numbered) step $2t$:
     assume $$ p_Z^{(2t)}\leq \min\{ p_Y^{(2t)}, p_X^{(2t)}\},p_Y^{(2t-1)}\leq \min\{p_Z^{(2t-1)}, p_X^{(2t-1)}\}.$$

    These inequalities imply that $Y$-recurrence was performed at step $2t-1$ and $Z$-recurrence will be performed at step $2t$.
    Then, $$p_{\text{pass}_Z}p_X^{(2t+1)}=\left(p_X^{(2t)}\right)^2+\left(p_Y^{(2t)}\right)^2\geq 2 p_X^{(2t)}p_Y^{(2t)} =  p_{\text{pass}_Z} p_Y^{(2t+1)}. $$  By Theorem \ref{thm:guarantee_improvement}
    $$ p_I^{(2t+1)}>p_I^{(2t)}> p_I^{(2t-1)}>1/2.$$
    Moreover,  
    \begin{align*}
        p_{\text{pass}_Z}p_Z^{(2t+1)}=  2 p_I^{(2t)}p_Z^{(2t)} = 4\frac{1}{p_{\text{pass}_Y}^2}\biggl[\left(p_I^{(2t-1)}\right)^2+\left(p_Y^{(2t-1)}\right)^2\biggr]p_X^{(2t-1)}p_Z^{(2t-1)}
    \end{align*}
    
    Using Lemma \ref{lem:ramgm}, with  $a =p_I^{(2t-1)},b=p_Y^{(2t-1)},c=\max\{p_X^{(2t-1)},p_Z^{(2t-1)}\}, d =\min\{p_X^{(2t-1)},p_Z^{(2t-1)}\}$, we have 
    \begin{align*}
        p_{\text{pass}_Z}p_Z^{(2t+1)}&\geq 4\frac{1}{p_{\text{pass}_Y}^2}p_I^{(2t-1)}p_Y^{(2t-1)}\biggl[\left(p_X^{(2t-1)}\right)^2+\left(p_Z^{(2t-1)}\right)^2\biggr]\\&=2p_Y^{(2t)}p_X^{(2t)}= p_{\text{pass}_Y}p_Y^{(2t+1)}
    \end{align*}
    Thus, we have shown that $p_Y^{(2t+1)}\leq \min\{p_Z^{(2t+1)},p_X^{(2t+1)}\}$, and the next step will be a $Y$-recurrence.\\
    
     At (odd-numbered) step $2t+1$:
     assume $$p_Y^{(2t+1)}\leq \min\{p_Z^{(2t+1)}, p_X^{(2t+1)}\}, p_Z^{(2t)}\leq \min\{ p_Y^{(2t)}, p_X^{(2t)}\}.$$

    These inequalities imply that $Y$-recurrence is performed at step $2t+1$ and $Z$-recurrence was performed at step $2t$.
    Then, $$p_{\text{pass}_Y}p_X^{(2t+2)}=\left(p_X^{(2t+1)}\right)^2+\left(p_Z^{(2t+1)}\right)^2\geq 2 p_X^{(2t+1)}p_Z^{(2t+1)} =  p_{\text{pass}_Y} p_Z^{(2t+2)}. $$  By Theorem \ref{thm:guarantee_improvement}
    $$p_I^{(2t+2)}> p_I^{(2t+1)}> p_I^{(2t)}>1/2.$$
    Moreover,  
    \begin{align*}
        p_{\text{pass}_Y}p_Y^{(2t+2)}=  2 p_I^{(2t+1)}p_Y^{(2t+1)} = 4\frac{1}{p_{\text{pass}_Z}^2}\biggl[\left(p_I^{(2t)}\right)^2+\left(p_Z^{(2t)}\right)^2\biggr]p_X^{(2t)}p_Y^{(2t)}
    \end{align*}
    
    Using Lemma \ref{lem:ramgm}, with  $a =p_I^{(2t)},b=p_Z^{(2t)},c=\max\{p_X^{(2t)},p_Y^{(2t)}\}, d =\min\{p_X^{(2t)},p_Y^{(2t)}\}$, we have 
    \begin{align*}
        p_{\text{pass}_Y}p_Y^{(2t+2)}=  2 p_I^{(2t+1)}p_Y^{(2t+1)} &\geq 4\frac{1}{p_{\text{pass}_Z}^2}p_I^{(2t)}p_Z^{(2t)}\biggl[\left(p_X^{(2t)}\right)^2+\left(p_Y^{(2t)}\right)^2\biggr]\\&=2p_Z^{(2t+1)}p_X^{(2t+1)}= p_{\text{pass}_Y}p_Z^{(2t+2)}
    \end{align*}
    Thus, we have shown that $p_Z^{(2t+2)}\leq \min\{p_Y^{(2t+2)},p_X^{(2t+2)}\}$, and the next step will be a $Z$-recurrence.

\end{proof}
\end{document}